\documentclass[11pt]{article}

\usepackage[margin=1in]{geometry}
\usepackage{amsmath,amssymb,amsthm}
\usepackage{bm}
\usepackage{booktabs}
\usepackage{enumitem}
\usepackage{hyperref}
\usepackage{natbib}
\usepackage{graphicx}
\usepackage[section]{placeins}
\usepackage{multirow}
\usepackage{array}
\usepackage{longtable}
\usepackage{mathtools}
\usepackage{url}

\graphicspath{{psnb_sim/output/figures/}}

\hypersetup{
  colorlinks=true,
  linkcolor=blue,
  citecolor=blue,
  urlcolor=blue
}

\newcommand{\E}{\mathbb{E}}
\newcommand{\Prob}{\mathbb{P}}
\newcommand{\Ind}{\mathbb{I}}
\newcommand{\R}{\mathbb{R}}
\DeclareMathOperator{\Var}{Var}

\newtheorem{definition}{Definition}
\newtheorem{assumption}{Assumption}
\newtheorem{theorem}{Theorem}
\newtheorem{proposition}{Proposition}
\newtheorem{corollary}{Corollary}

\title{
Priority-Standardized Net Benefit:\
A Stage-Normalized Estimand for Hierarchical Composite Endpoints
}

\author{
David McCoy, John Leopold, Shirley Galbiati, Minhthien Vu, and Bonnie Zhang \\
\vspace{2mm}
\textit{Edwards Lifesciences}
}

\date{\today}

\begin{document}
\maketitle

\begin{abstract}
Hierarchical composite endpoints analyzed with win statistics are increasingly used
when outcomes differ in clinical importance and hard events are too rare to support
a single-component primary endpoint. Their main practical appeal is that the analysis
respects a prespecified priority order. Their main practical vulnerability is less often
stated: standard win summaries aggregate layer-specific information using \emph{reach
probabilities}---the fraction of treated--control pairs that remain tied long enough
to be compared at each layer. Consequently, when upper layers are rare or highly tied,
a frequently reached last layer can numerically dominate the composite even if it is
lowest priority and potentially more vulnerable to bias or missingness, such as an
open-label patient-reported outcome.

We propose the \textbf{Priority-Standardized Net Benefit} (PSNB), an estimand that
decomposes a hierarchical comparison into stage-conditional net benefits and recombines
them using a prespecified \emph{priority/credibility charter} rather than data-determined
reach weights. The paper makes three contributions. First, it identifies the main
methodological gap as one of \emph{across-layer aggregation}, not within-layer comparison,
and shows that fixed weighted win--loss statistics remain mechanically reach-weighted.
Second, it develops an influence-function-based estimator and large-sample inference for
PSNB, together with a ratio-scale companion, the \emph{Priority-Standardized Win Ratio}
(PSWR). Third, it introduces practical design tools---a layer influence cap, tipping-point
analysis for the last-layer weight, and charter-envelope sensitivity analysis---that help
trial teams document the evidentiary role of each hierarchy layer in the protocol or
statistical analysis plan before unblinding.

Simulations are organized around the properties that distinguish PSNB from standard win
summaries. They confirm nominal type I error at realistic trial sample sizes, show that
PSNB is approximately invariant to large changes in reach when stage-conditional effects
are held fixed, and demonstrate that upstream tie rules can make standard hierarchical
summaries almost entirely later-layer-driven. They further show that sensitivity to
late-layer bias and missingness is not automatically removed by PSNB, but can be traced
to the chosen charter and is largely invariant to reach: smaller later-layer weights
attenuate that sensitivity, whereas larger later-layer weights can
increase it relative to standard reach-weighted summaries. Finally, power varies
predictably along a design-stage efficiency--robustness frontier and along a
simulation-based sample-size curve. In the design scenarios studied here, at a sample
size where the standard Win Ratio has approximately $90\%$ power under broad benefit,
PSNB with the baseline charters keeps pace; under final-layer-dominated benefit, power
depends strongly on the permitted Layer-3 charter weight. Taken together, these results
support PSNB as an
estimand-level modification of
hierarchical composites that preserves prioritized pairwise comparisons while replacing
implicit reach-weighting with a design choice fixed before unblinding.
\end{abstract}

\noindent\textbf{Keywords:} composite endpoints; win ratio; generalized pairwise comparisons; estimand; patient-reported outcomes; hierarchical outcomes; sensitivity analysis; $U$-statistics.

\section{Introduction}

Hierarchical composite endpoints are increasingly used in randomized trials when
outcomes differ in clinical importance and a single time-to-first-event endpoint
would either discard clinically relevant information or treat unlike outcomes as
exchangeable. A common setting is a hierarchy in which death is prioritized above
a non-fatal clinical event and a later patient-reported or health-status outcome.
Investigators often choose such hierarchies because they want the primary analysis
to respect that clinical ordering.

The standard win-ratio framework of \citet{Pocock2012}, building on the
Finkelstein--Schoenfeld construction \citep{FinkelsteinSchoenfeld1999} and the
more general generalized pairwise comparison framework of \citet{Buyse2010},
addresses this by comparing treated and control participants according to a
prespecified priority order. The first layer at which a treated--control pair
differs determines a win, loss, or tie. This is clinically appealing because it
respects the hierarchy at the level of pairwise comparison.

Yet a hierarchical comparison rule is not the same as a hierarchy-respecting
summary measure. Standard win summaries aggregate stage-specific information using
the empirical \emph{reach} distribution: a layer contributes to the overall result
only insofar as treated--control pairs remain tied on higher-priority layers and
therefore reach it. Consequently, when upper layers are rare or highly tied,
later layers are reached often and can contribute disproportionately to the
headline composite summary. The hierarchy stated in the trial objective may
therefore prioritize hard outcomes, while the numerical value of the primary
summary is driven by whichever layer pairs most often reach.

This issue is especially important when a frequently reached later layer is more
vulnerable than earlier layers to contextual effects, bias, or missingness.
Examples include symptom response, health-status change, or other patient-reported
outcomes collected without blinding. The concern is not that such outcomes are
unimportant. On the contrary, they are often central to the clinical question.
The concern is that a \emph{hierarchical} composite can become, in practice, a
\emph{de facto later-layer endpoint} without showing that dependence in the trial
report.
A single published win ratio may therefore be interpreted as if benefit were
broadly distributed across the hierarchy when, numerically, the result is
sustained largely by one frequently reached layer.

Existing methodological work has improved several important aspects of
hierarchical composites. Some methods modify the pairwise comparison rule or
score, for example through fixed weighting, thresholds, or clinically meaningful
margins \citep{Luo2017WeightedWinLoss,Dong2020,Mou2024thresholds}. Other work
clarifies the pairwise estimand targeted by win statistics under censoring and
pairing schemes \citep{Mao2024EstimandWR,EvenJosse2025CausalWR}. Less attention
has been given to a separate design question: once those pairwise comparisons
have been defined, how should stage-specific information be recombined
\emph{across} the hierarchy into the primary summary measure? We view that
recombination as part of the summary-measure attribute of the estimand under
ICH E9(R1) \citep{ICHE9R1}.

This paper addresses that question. We propose the
\emph{Priority-Standardized Net Benefit} (PSNB), an estimand that leaves the
within-layer pairwise comparison rule unchanged but replaces data-determined reach
weights with a prespecified \emph{priority/credibility charter}. Under PSNB, the
influence of each layer on the primary summary is fixed before unblinding rather
than left to how often pairs happen to reach that layer. For readers who prefer a ratio-scale summary,
we also define a companion \emph{Priority-Standardized Win Ratio} (PSWR).

\paragraph{Contributions.}
The paper makes three contributions.
\begin{enumerate}[leftmargin=*,label=(\roman*)]
\item It identifies across-layer aggregation as an under-specified estimand
choice in hierarchical composite analysis and shows that fixed weighted
win--loss summaries do not remove the problem because their layer contributions
remain mechanically reach-weighted.
\item It defines PSNB as a stage-standardized estimand, develops a two-sample
$U$-statistic estimator with influence-function-based large-sample inference, and
derives a simple layer influence cap showing that the maximum contribution of any
layer is bounded by its charter weight.
\item It introduces prespecifiable design and reporting tools---a tipping-point
analysis for the weight on a later layer and a charter-envelope sensitivity
analysis over an admissible weight set---that can be incorporated directly into
a statistical analysis plan before unblinding.
\end{enumerate}

The paper is written as an estimand and trial-design contribution rather than as
a commentary on any specific published trial. All illustrations use simulation
scenarios calibrated to features common in contemporary hierarchical composites:
rare upper-layer hard events, an intermediate objective layer, and a frequently
reached later layer that may be more vulnerable to bias or missingness than the
layers above it. The intended use of PSNB is prospective: to help define,
justify, and transparently report the role of each hierarchy layer before
unblinding.

\section{Related work and methodological positioning}

\subsection{Standard win statistics and why reach matters}

Under generalized pairwise comparison (GPC), a treated outcome and a control outcome
are compared according to a prioritized rule, and the first non-tied layer determines
the pairwise result. The best-known summaries are the \emph{win ratio},
$\mathrm{WR}=\Prob(\text{win})/\Prob(\text{loss})$, the \emph{net benefit} or
\emph{win difference}, $\Delta=\Prob(\text{win})-\Prob(\text{loss})$, and the
\emph{win odds}, which incorporates ties symmetrically
\citep{Buyse2010,Pocock2012,Dong2020,Brunner2021}. A substantial literature now
supports the use of these summaries through large-sample inference, sample-size
formulae, and practitioner-oriented guidance on interpretation and reporting
\citep{BebuLachin2016,Mao2022SampleSize,Pocock2024Lessons}.

These summaries are useful and often clinically intuitive, but they aggregate
stage-specific information through the empirical flow of pairs across the hierarchy.
A layer contributes to the overall summary only to the extent that treated--control
pairs remain tied on higher-priority layers and therefore \emph{reach} it.
Consequently, the final numerical value depends not only on the stated priority
order, but also on how often pairs are tied at higher layers, which in turn depends
on event rarity, censoring, threshold choices, and the noisiness of earlier layers.
These dependencies are not flaws of the estimators; they are consequences of how the
standard summary functionals themselves are defined.

\subsection{Three adjacent strands of methodological development}

Existing work has improved hierarchical composites along three broad lines.

First, some methods modify the \emph{pairwise score or stage-specific comparison
rule}. Examples include fixed weighted win--loss summaries, which attach
pre-specified weights to wins and losses resolved at different stages
\citep{Luo2017WeightedWinLoss}, and threshold- or margin-based rules for continuous
components, which redefine what constitutes a clinically meaningful win within a
stage \citep{Mou2024thresholds}. These approaches can improve stage-level
interpretability, but they do not remove the dependence of the overall composite
on how frequently a stage is reached.

Second, some methods refine the \emph{summary functional itself}, especially when
ties are common. \citet{Dong2020} discuss interpretation of the win ratio and the
role of ties, while \citet{Brunner2021} propose the win odds as a tie-aware
alternative. These developments improve communication and inference when ties are
substantial, but they still aggregate information through unconditional pairwise
win, loss, and tie frequencies induced by the hierarchy.

Third, other work has focused on \emph{inference, design, and estimands} for
existing win statistics. \citet{BebuLachin2016} derive large-sample inference using
two-sample $U$-statistic arguments, \citet{Mao2022SampleSize} develop sample-size
formulae for general win-ratio analyses, and recent work clarifies the estimand and
causal target of win statistics under censoring and pairing schemes
\citep{Mao2024EstimandWR,EvenJosse2025CausalWR}. Recent practitioner-facing reviews
have also emphasized that interpretation can become difficult when lower-priority
tiers account for much of the numerical signal \citep{Pocock2024Lessons}.

PSNB is complementary to all three strands. It does not replace the within-stage
comparison rule, does not change how ties are defined or summarized, and does not
alter the basic two-sample pairwise estimand perspective. Instead, it changes the
\emph{aggregation map}: it replaces reach-weighted aggregation of stage information
with aggregation based on a prespecified charter.

\subsection{Positioning PSNB among adjacent summaries}

Table~\ref{tab:positioning} summarizes the specific methodological niche of PSNB.
The key distinction is not the pairwise notion of a win, but the \emph{across-layer
aggregation rule}.

\begin{table}[ht]
\centering
\caption{Conceptual positioning of PSNB relative to adjacent hierarchical summaries.
The distinguishing feature is the across-layer aggregation rule.}
\label{tab:positioning}
\begin{tabular}{@{}p{3.8cm}p{4.4cm}p{3.6cm}p{3.1cm}@{}}
\toprule
Method family & What is aggregated across the hierarchy? & Does layer influence depend on reach? & Is maximum layer influence fixed \emph{a priori}? \\
\midrule
Standard win summaries (WR, net benefit, win odds) &
Unconditional pairwise win/loss/tie probabilities resolved across layers &
Yes &
No \\
Fixed weighted win--loss \citep{Luo2017WeightedWinLoss} &
Weighted unconditional pairwise stage scores &
Yes &
No \\
PSNB (this paper) &
Weighted \emph{stage-conditional} net benefits &
No &
Yes \\
\bottomrule
\end{tabular}

\medskip
\begin{minipage}{0.97\linewidth}
\footnotesize
\emph{Note:} PSWR is a ratio-scale companion to PSNB that uses the same
prespecified charter and the same reach-independent aggregation principle.
It is not listed as a separate methodological family because it is intended as
a secondary companion summary rather than a distinct primary contribution.
\end{minipage}
\end{table}

\section{Setup and notation}

\subsection{Potential outcomes, pairwise comparisons, and reach}

Consider a two-arm randomized trial with treatment indicator $Z\in\{0,1\}$.
Each participant has a $K$-component outcome vector
\[
Y=\bigl(Y^{(1)},\ldots,Y^{(K)}\bigr),
\]
ordered from highest to lowest clinical priority.
Let $Y(1)$ and $Y(0)$ denote the potential outcome vectors under treatment and control.

Following the generalized pairwise comparison (GPC) perspective \citep{Buyse2010},
define two independent draws
\[
Y_1 \sim \mathcal{L}\{Y(1)\},
\qquad
Y_0 \sim \mathcal{L}\{Y(0)\},
\]
interpretable as a random treated participant and a random control participant from
the target population. Under consistency, no interference, and randomization
(formalized in Assumption~\ref{ass:rand} below), these arm-specific distributions are
identified by the observed outcome distributions in the two randomized arms.

For each layer $k\in\{1,\ldots,K\}$, define a prespecified pairwise comparison function
\[
c_k(y_1,y_0)\in\{-1,0,+1\},
\]
where $+1$ means treatment is favored at layer $k$, $-1$ means control is favored,
and $0$ denotes a tie, including clinically irrelevant differences under the chosen
comparison rule.

Typical examples include:
\begin{itemize}[leftmargin=*]
\item \textbf{Time-to-event layers:} later event time is better through a fixed horizon
$\tau$; pairs that cannot be informatively ordered through $\tau$ under the chosen
censoring strategy are treated according to that strategy, for example as ties in a
simple restricted-horizon comparison.
\item \textbf{Counts or recurrent events:} fewer events through $\tau$ is better,
optionally up to a prespecified margin; pairs within the margin are tied.
\item \textbf{Continuous or PRO layers:} greater improvement is better when the
treated--control difference exceeds a clinically meaningful threshold; otherwise the
pair is tied.
\end{itemize}

For time-to-event layers, the choice of censoring rule is important but conceptually
separate from the contribution of the present paper. PSNB does not prescribe how
censoring should be handled; it can be paired with any prespecified comparison rule for
censored outcomes, including simple fixed-horizon rules or censoring-adjusted approaches
such as those discussed by \citet{Mao2024EstimandWR}. Our focus is what happens
\emph{after} the stage-specific pairwise comparisons have been defined.

Define the stage-reach indicator
\[
R_k(y_1,y_0)=\prod_{m=1}^{k-1}\Ind\{c_m(y_1,y_0)=0\},
\]
with the convention $R_1\equiv 1$. Thus $R_k=1$ if and only if the pair is tied on all
higher-priority layers and therefore reaches layer $k$.

The standard hierarchical comparison score is
\[
\phi(y_1,y_0)=\sum_{k=1}^K R_k(y_1,y_0)c_k(y_1,y_0).
\]
Because $R_k=1$ only when all higher-priority layers are tied, at most one term in this
sum can be nonzero; equivalently, only the first non-tied layer contributes. Therefore
\[
\phi(y_1,y_0)\in\{-1,0,+1\}.
\]

\subsection{Standard win summaries as reach-weighted functionals}

Let
\[
W=\Prob\{\phi(Y_1,Y_0)=+1\},
\qquad
L=\Prob\{\phi(Y_1,Y_0)=-1\},
\qquad
T=\Prob\{\phi(Y_1,Y_0)=0\}.
\]
The standard net benefit is $\Delta=W-L$, and the standard win ratio is
$\mathrm{WR}=W/L$ when $L>0$.

For each stage $k$, define the reach probability
\[
r_k=\Prob\{R_k(Y_1,Y_0)=1\},
\]
and the stage-conditional win, loss, and tie probabilities
\[
w_k=\Prob\{c_k(Y_1,Y_0)=+1\mid R_k=1\},
\qquad
\ell_k=\Prob\{c_k(Y_1,Y_0)=-1\mid R_k=1\},
\qquad
t_k=1-w_k-\ell_k.
\]
Define the stage-conditional net benefit
\[
\Delta_k
=
w_k-\ell_k
=
\E\!\left[c_k(Y_1,Y_0)\mid R_k(Y_1,Y_0)=1\right].
\]

Then the unconditional win and loss probabilities decompose as
\[
W=\sum_{k=1}^K r_k w_k,
\qquad
L=\sum_{k=1}^K r_k \ell_k,
\]
and the standard net benefit decomposes as
\begin{equation}\label{eq:standard_reach_weight}
\Delta
=
W-L
=
\sum_{k=1}^K r_k\,(w_k-\ell_k)
=
\sum_{k=1}^K r_k \Delta_k.
\end{equation}

Equation~\eqref{eq:standard_reach_weight} is the central structural fact of the paper.
Standard hierarchical net benefit is a weighted average of stage-conditional effects,
but the weights are reach probabilities determined by the joint outcome distribution and
the chosen tie/censoring conventions, not by the stated clinical priorities alone.

\paragraph{Interpretation.}
The event $R_k=1$ is a property of a \emph{pairwise comparison}, not a causal
subpopulation of individuals. It does not identify patients who ``survive to stage $k$'';
rather, it identifies treated--control pairs that are tied on all higher-priority layers.
Accordingly, $\Delta_k$ should be interpreted as follows:
\emph{among treated--control pairs tied on all higher-priority layers, what is the
win--loss imbalance at stage $k$?}
PSNB uses these stage-conditional pairwise effects as its primitive layer-specific
estimands.

\section{Priority-Standardized Net Benefit}

\subsection{Definition}

Let $\bm{\alpha}=(\alpha_1,\ldots,\alpha_K)$ denote a prespecified
\emph{priority/credibility charter}, fixed before unblinding in the protocol or
statistical analysis plan, with $\alpha_k\ge 0$ and $\sum_{k=1}^K\alpha_k=1$.
The charter is part of the estimand definition, not a post hoc tuning parameter.

\begin{definition}[Priority-Standardized Net Benefit]
The \emph{Priority-Standardized Net Benefit} (PSNB) is
\begin{equation}\label{eq:psnb}
\Delta_{\mathrm{PS}}(\bm{\alpha})
=
\sum_{k=1}^K \alpha_k \Delta_k
=
\sum_{k=1}^K \alpha_k
\E\!\left[c_k(Y_1,Y_0)\mid R_k(Y_1,Y_0)=1\right].
\end{equation}
\end{definition}

Thus PSNB leaves the within-layer pairwise comparison rule unchanged and modifies
only how stage-specific information is recombined across the hierarchy. Because
each stage-conditional net benefit satisfies $\Delta_k\in[-1,1]$, PSNB itself is
bounded on the interpretable scale $[-1,1]$.

\paragraph{Interpretation.}
PSNB may be viewed through a two-stage thought experiment. First choose a layer
index $K^\star\in\{1,\ldots,K\}$ according to the charter probabilities
$\bm{\alpha}$. Then evaluate the expected treated--control comparison at that
layer among pairs that reach it. PSNB is the resulting weighted average of
stage-conditional win--loss imbalances. This is an interpretive re-expression,
not a sampling design; the estimand is defined directly by
equation~\eqref{eq:psnb}.

\subsection{What PSNB is not}

A natural alternative is a fixed weighted win--loss summary with pairwise score
\[
\psi_{\bm{\beta}}(y_1,y_0)
=
\sum_{k=1}^K \beta_k R_k(y_1,y_0)c_k(y_1,y_0),
\]
for some fixed weights $\bm{\beta}=(\beta_1,\ldots,\beta_K)$.
This is not the same object as PSNB.

\begin{proposition}[Fixed weighted win--loss remains reach-weighted]
\label{prop:weighted_reach}
Let
$\Delta_{\bm{\beta}}=\E\{\psi_{\bm{\beta}}(Y_1,Y_0)\}$.
Then
\[
\Delta_{\bm{\beta}}
=
\sum_{k=1}^K \beta_k r_k\Delta_k.
\]
Therefore, for any fixed $\bm{\beta}$, the influence of stage $k$ still depends on
the reach probability $r_k$.
\end{proposition}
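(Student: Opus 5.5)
The plan is a direct computation: apply linearity of expectation to $\psi_{\bm{\beta}}$ and then rewrite each layer term by conditioning on the reach event, exactly as was done for the standard net benefit in equation~\eqref{eq:standard_reach_weight}. Since $\bm{\beta}$ is a fixed (non-random) vector and each summand $\beta_k R_k c_k$ is bounded by $|\beta_k|$, linearity gives
\[
\Delta_{\bm{\beta}}
=
\sum_{k=1}^K \beta_k\,\E\!\left[R_k(Y_1,Y_0)\,c_k(Y_1,Y_0)\right].
\]
Integrability is not an issue here.

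The key step is to evaluate $\E[R_k c_k]$ for each $k$. Because $R_k\in\{0,1\}$ is an indicator, we have $R_k c_k = \Ind\{R_k=1\}\,c_k$. When $r_k=\Prob\{R_k=1\}>0$, the definition of conditional expectation on an event gives
\[
\E[R_k c_k]=\Prob\{R_k=1\}\,\E[c_k\mid R_k=1]=r_k\Delta_k,
\]
using the definition $\Delta_k=w_k-\ell_k=\E[c_k\mid R_k=1]$ from the setup. Equivalently, one can write $\E[R_k c_k]=\Prob\{R_k=1,c_k=+1\}-\Prob\{R_k=1,c_k=-1\}=r_k w_k-r_k\ell_k$, which mirrors the decomposition of $W$ and $L$ given earlier. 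Substituting into the sum yields $\Delta_{\bm{\beta}}=\sum_k\beta_k r_k\Delta_k$.

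The only point needing care, and the place I would expect the main obstacle, is the degenerate case $r_k=0$, where $\Delta_k$ is undefined as a conditional expectation. I would handle it by adopting the convention that $r_k\Delta_k:=0$ when $r_k=0$, or equivalently by assigning $\Delta_k$ any value in $[-1,1]$. In that case $\E[R_k c_k]=0$ because $R_k=0$ almost surely, so the identity holds under either convention.

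The concluding sentence of the proposition then follows immediately. The coefficient multiplying $\Delta_k$ is $\beta_k r_k$, so for fixed $\bm{\beta}$ and fixed stage-conditional effects, changing $r_k$ (for example through rarer upper-layer events) changes the contribution of stage $k$ proportionally. This contrasts with the coefficient $\alpha_k$ in equation~\eqref{eq:psnb}, which does not depend on $r_k$. I would also note that taking $\beta_k\equiv 1$ recovers equation~\eqref{eq:standard_reach_weight} as a special case.
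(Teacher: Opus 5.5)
Your proposal is correct and is the paper's proof: linearity of expectation, then $\E\{R_kc_k\}=\Prob(R_k=1)\,\E\{c_k\mid R_k=1\}=r_k\Delta_k$ for each stage. Your explicit convention for the degenerate case $r_k=0$ is a small refinement that the paper leaves implicit, and it does not change the argument.
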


\begin{proof}
Because $R_k\in\{0,1\}$,
\[
\E\{R_kc_k\}
=
\Prob(R_k=1)\E\{c_k\mid R_k=1\}
=
r_k\Delta_k.
\]
The result follows by linearity.
\end{proof}

Proposition~\ref{prop:weighted_reach} is why PSNB should be viewed as an
estimand-level change in \emph{across-layer aggregation}. Fixed weighted
win--loss summaries can attenuate or amplify a stage contribution, but they do
not decouple that contribution from reach. Reproducing PSNB from an unconditional
weighted score would require data-dependent weights
$\beta_k\propto \alpha_k/r_k$, which defeats the goal of a fixed, protocol-level
charter.

\subsection{Operational properties}

The following properties make PSNB useful as a prospective design and reporting
tool rather than only as a mathematical alternative to standard win summaries.

\begin{corollary}[Reach invariance]
\label{cor:reach_invariance}
Suppose two data-generating mechanisms share the same stage-conditional effects
$(\Delta_1,\ldots,\Delta_K)$ but have different reach probabilities
$(r_1,\ldots,r_K)$. Then PSNB is identical under the two mechanisms, whereas the
standard net benefit $\Delta=\sum_k r_k\Delta_k$ generally changes.
\end{corollary}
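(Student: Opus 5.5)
The corollary follows almost immediately from the definition of PSNB in \eqref{eq:psnb} together with the reach decomposition \eqref{eq:standard_reach_weight}. I will treat the two assertions separately.

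\textbf{PSNB is invariant.} Under either data-generating mechanism, \eqref{eq:psnb} gives $\Delta_{\mathrm{PS}}(\bm{\alpha})=\sum_k \alpha_k\Delta_k$. The charter $\bm{\alpha}$ is fixed before unblinding and does not depend on the data-generating mechanism. By hypothesis the vector $(\Delta_1,\ldots,\Delta_K)$ is the same under both mechanisms. So PSNB is the same deterministic function of the same inputs in both cases, and the reach probabilities $r_k$ never enter. One point needs care: $\Delta_k$ is a conditional expectation given $R_k=1$, so it is well defined only when $r_k>0$. I will therefore assume $r_k>0$ for every $k$ with $\alpha_k>0$ under both mechanisms. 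Otherwise I would adopt a convention, for example that $\Delta_k$ is specified as part of the shared stage-conditional effects.

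\textbf{The standard net benefit generally changes.} Under mechanism $j\in\{A,B\}$, \eqref{eq:standard_reach_weight} gives $\Delta^{(j)}=\sum_k r_k^{(j)}\Delta_k$. Taking the difference,
\[
\Delta^{(A)}-\Delta^{(B)}=\sum_{k=1}^K \bigl(r_k^{(A)}-r_k^{(B)}\bigr)\Delta_k,
\]
which is a nonzero linear functional of the reach difference whenever some $\Delta_k\neq 0$ and the reach vectors are not orthogonal to $(\Delta_k)$. Since ``generally'' is a genericity claim, I will make it concrete in two ways. First, $r_1\equiv 1$ always, so the reach vector varies only in coordinates $2,\ldots,K$. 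The difference therefore vanishes only on the hyperplane $\sum_{k\ge2}(r_k^{(A)}-r_k^{(B)})\Delta_k=0$, a measure-zero set of reach perturbations whenever some $\Delta_k\ne0$ with $k\ge 2$. Second, I will give an explicit example with $K=2$: $\Delta_1=0$, $\Delta_2=\delta\neq0$, and upper-layer tie rates differing between the mechanisms, so that $r_2^{(A)}\neq r_2^{(B)}$. Then $\Delta^{(A)}-\Delta^{(B)}=(r_2^{(A)}-r_2^{(B)})\delta\neq0$.

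\textbf{Main obstacle.} The step I expect to need the most care is showing that such mechanisms exist, that is, that reach can vary while the stage-conditional effects stay fixed. The reason is that $r_2=\Prob(c_1=0)$ and $\Delta_2$ are both functionals of the same joint law. I would handle this by construction. Let layer-1 outcomes be independent of layer-2 outcomes within each arm, and take layer 1 to be a binary event with the same probability $p$ in both arms. Then $\Delta_1=0$, and the tie probability $r_2=p^2+(1-p)^2$ varies with $p$. Meanwhile, by independence, the conditional law of layer 2 given a tie at layer 1 equals its marginal, so $\Delta_2$ is unchanged as $p$ varies.
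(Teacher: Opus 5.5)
Your proposal is correct and matches the paper's reasoning. The paper gives no separate proof and treats the corollary as immediate from \eqref{eq:psnb} (for a fixed charter, PSNB depends on the mechanism only through $(\Delta_1,\ldots,\Delta_K)$) and from the decomposition \eqref{eq:standard_reach_weight}; your positivity caveat on $r_k$, your restriction to some $\Delta_k\neq 0$ with $k\ge 2$ (needed because $r_1\equiv 1$), and your independent-layers construction showing that reach can vary with the $\Delta_k$ held fixed are all sound and make explicit what the paper leaves implicit.
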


Corollary~\ref{cor:reach_invariance} formalizes the main substantive distinction
between PSNB and standard reach-weighted summaries. In practical terms, two trials
of the same treatment with identical stage-conditional effects but different event
rates, censoring patterns, or upstream tie rules will have the same PSNB under the
same charter, but will generally have different standard net benefits.

This invariance is a feature only when it matches the scientific question. Event
frequency is often clinically meaningful, and a reach-weighted win statistic may be
the more appropriate summary when the estimand is intended to reflect the realized
population burden of earlier events. PSNB instead targets stage-conditional effects
combined under layer weights set in advance. It should therefore be used when the
trial objective is to preserve the hierarchy while not letting the observed reach
distribution determine across-layer importance by default.

\begin{corollary}[Layer influence cap]
\label{cor:cap}
For every stage $k$, $|\Delta_k|\le 1$ and therefore
\[
|\alpha_k\Delta_k|\le \alpha_k.
\]
Hence the absolute contribution of stage $k$ to PSNB is bounded above by its
charter weight.
\end{corollary}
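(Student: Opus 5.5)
The plan is to bound the stage-conditional net benefit directly from its definition, and then multiply by the nonnegative charter weight. Recall that $\Delta_k=\E[c_k(Y_1,Y_0)\mid R_k(Y_1,Y_0)=1]$ and that $c_k$ takes values in $\{-1,0,+1\}$. Hence $|c_k|\le 1$ pointwise. Jensen's inequality, or simply monotonicity of conditional expectation, then gives
\[
|\Delta_k|\le \E[\,|c_k|\mid R_k=1\,]\le 1.
\]
Equivalently, one can write $\Delta_k=w_k-\ell_k$ with $w_k,\ell_k\ge 0$ and $w_k+\ell_k=1-t_k\le 1$. This form gives $-1\le -\ell_k\le \Delta_k\le w_k\le 1$ immediately, and I would state it this way to stay in the paper's notation.

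Second, since the charter satisfies $\alpha_k\ge 0$, we have $|\alpha_k\Delta_k|=\alpha_k|\Delta_k|\le\alpha_k$. The term $\alpha_k\Delta_k$ is exactly the stage-$k$ summand of $\Delta_{\mathrm{PS}}(\bm{\alpha})$ in equation~\eqref{eq:psnb}, so this is the claimed bound on the contribution of stage $k$.

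The only point needing care is well-definedness when $r_k=\Prob(R_k=1)=0$, because the conditional expectation is then undefined. I would adopt a convention for that case: either assume $r_k>0$ for every layer carrying a positive weight, or define $\Delta_k=0$, or take any value in $[-1,1]$, when the stage is never reached. Under any of these conventions the bound $|\Delta_k|\le 1$ still holds. Beyond this convention, no obstacle is expected; the result is a direct consequence of the boundedness of the comparison functions $c_k$.
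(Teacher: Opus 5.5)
Your proof is correct and matches the paper's own (essentially one-line) justification: $c_k\in\{-1,0,+1\}$ forces $\Delta_k=w_k-\ell_k\in[-1,1]$, and multiplying by the nonnegative weight $\alpha_k$ gives $|\alpha_k\Delta_k|\le\alpha_k$. Your remark about $r_k=0$ is a reasonable addition; the paper handles that case through Assumption~\ref{ass:reach} (positive reach for stages with $\alpha_k>0$).
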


Corollary~\ref{cor:cap} provides a direct trial-design interpretation of the
charter. If a later layer receives weight $\alpha_K=0.10$, then its contribution
to the primary estimand can never exceed $0.10$ in absolute value, regardless of
how frequently it is reached. This is a worst-case bound rather than a typical
contribution: in realistic trials $|\Delta_k|$ is usually well below 1, so the
effective contribution is often much smaller. The charter weight therefore acts as
a hard ceiling, not a typical contribution.

\begin{proposition}[Tipping point for a later-layer weight]
\label{prop:tipping}
Fix a reference charter over the first $K-1$ layers,
$\tilde{\bm{\alpha}}=(\tilde\alpha_1,\ldots,\tilde\alpha_{K-1})$ with
$\tilde\alpha_j\ge 0$ and $\sum_{j=1}^{K-1}\tilde\alpha_j=1$.
Consider the one-parameter family of charters that holds the \emph{relative}
weighting among the first $K-1$ layers fixed at $\tilde{\bm{\alpha}}$ while
varying only the share assigned to the final layer:
\[
\alpha_j(\lambda)=(1-\lambda)\tilde\alpha_j,\quad j<K,
\qquad
\alpha_K(\lambda)=\lambda,
\qquad
\lambda\in[0,1].
\]
Let
\[
A=\sum_{j=1}^{K-1}\tilde\alpha_j\Delta_j.
\]
Then
\[
\Delta_{\mathrm{PS}}\{\bm{\alpha}(\lambda)\}
=
(1-\lambda)A+\lambda\Delta_K.
\]
If $\Delta_K\neq A$, the unique sign-change point is
\[
\lambda^\star=\frac{A}{A-\Delta_K}.
\]
When $\lambda^\star\in[0,1]$, the overall sign of PSNB changes exactly at
$\lambda=\lambda^\star$.
\end{proposition}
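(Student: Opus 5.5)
The argument uses only the definition of PSNB in equation~\eqref{eq:psnb} and elementary algebra on affine functions of $\lambda$. First I would substitute the family $\bm{\alpha}(\lambda)$ into~\eqref{eq:psnb}. The weights are valid charters for every $\lambda\in[0,1]$: they are nonnegative because $\tilde\alpha_j\ge 0$ and $\lambda\in[0,1]$, and they sum to $(1-\lambda)\cdot 1+\lambda=1$. Linearity of the finite sum then gives
\[
\Delta_{\mathrm{PS}}\{\bm{\alpha}(\lambda)\}
=\sum_{j=1}^{K-1}(1-\lambda)\tilde\alpha_j\Delta_j+\lambda\Delta_K
=(1-\lambda)A+\lambda\Delta_K,
\]
since the stage-conditional effects $\Delta_k$ do not depend on the charter. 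This establishes the displayed identity.

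Next I would write this as the affine function $f(\lambda)=A+\lambda(\Delta_K-A)$. If $\Delta_K\neq A$, the slope is nonzero, so $f$ has exactly one root on $\R$. That root is $\lambda^\star=-A/(\Delta_K-A)=A/(A-\Delta_K)$, which gives uniqueness of the sign-change point.

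Finally, for the claim that the sign changes exactly at $\lambda^\star$ when $\lambda^\star\in[0,1]$, I would use strict monotonicity of a nonconstant affine function. On $[0,\lambda^\star)$, $f$ has the sign of $f(0)=A$. On $(\lambda^\star,1]$, it has the opposite sign, which is the sign of $f(1)=\Delta_K$ whenever $\lambda^\star<1$.

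The only point needing care, and the closest thing to an obstacle, is the boundary cases. If $\lambda^\star=0$, then $A=0$, and $f$ starts at zero and takes the sign of $\Delta_K$ for $\lambda>0$. If $\lambda^\star=1$, then $\Delta_K=0$, and $f$ has the sign of $A$ on $[0,1)$ and vanishes at $1$. So I would interpret ``changes sign exactly at $\lambda^\star$'' to mean that $f$ is zero there and has constant, opposite signs on either side within $[0,1]$, wherever those sides are nonempty. I would also note that when $\lambda^\star\notin[0,1]$, $A$ and $\Delta_K$ share a strict sign, so no sign change occurs over admissible charters.
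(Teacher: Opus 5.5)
Your proof is correct. It is the same argument the paper relies on: substitute $\bm{\alpha}(\lambda)$ into the PSNB definition, then find the root of the resulting affine function in $\lambda$ (the paper states the proposition without a written proof).

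Your extra checks tighten the paper's looser phrase ``changes sign exactly at $\lambda^\star$'': you verify that $\bm{\alpha}(\lambda)$ is a valid charter, and you treat the boundary cases $\lambda^\star\in\{0,1\}$, where PSNB only touches zero at an endpoint rather than strictly changing sign.
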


Proposition~\ref{prop:tipping} yields an interpretable protocol-level sensitivity
analysis: \emph{how much of the total charter may be assigned to the final layer
before the overall composite changes sign?} The family above is a natural
one-parameter construction because it varies only the total share allocated to the
final layer while keeping the relative weighting among the other layers fixed.
Other parameterizations are possible, but this one is simple to prespecify and
easy to communicate in a statistical analysis plan.

\begin{figure}[ht]
\centering
\includegraphics[width=0.85\linewidth]{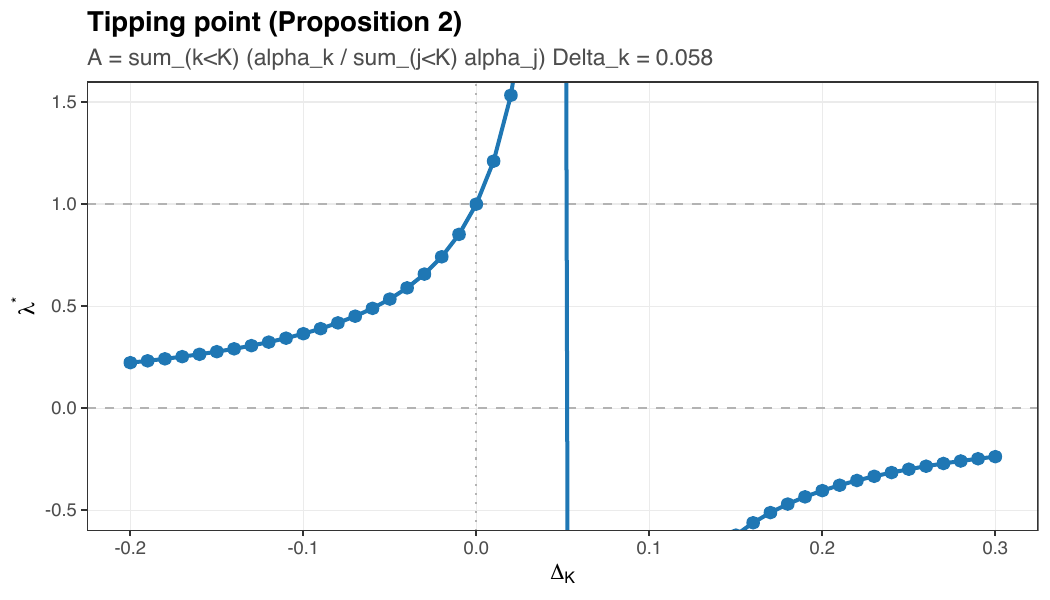}
\caption{Illustration of Proposition~\ref{prop:tipping}. The tipping point
$\lambda^\star = A/(A - \Delta_K)$ is plotted as a function of the final-layer
effect $\Delta_K$, with $\Delta_1$, $\Delta_2$ and the within-$\{1,2\}$ weights
fixed. Values of $\lambda^\star$ outside $[0,1]$ correspond to settings in which
\emph{no} admissible charter in the displayed family can flip the sign of the
composite; values inside $[0,1]$ pinpoint the smallest amount of total charter
that can be assigned to Layer~$K$ before the overall sign of PSNB changes.}
\label{fig:tipping_point}
\end{figure}

\begin{proposition}[Charter envelope]
\label{prop:envelope}
Let $\mathcal{A}\subset\R^K$ be any convex set of admissible charters defined by
linear constraints (for example, nonnegativity, sum-to-one, upper bounds, or
monotone priority constraints such as
$\alpha_1\ge\alpha_2\ge\cdots\ge\alpha_K$).
Then
\[
\underline\Delta_{\mathrm{PS}}
=
\min_{\bm{\alpha}\in\mathcal{A}}\Delta_{\mathrm{PS}}(\bm{\alpha}),
\qquad
\overline\Delta_{\mathrm{PS}}
=
\max_{\bm{\alpha}\in\mathcal{A}}\Delta_{\mathrm{PS}}(\bm{\alpha}),
\]
are attained at extreme points of $\mathcal{A}$ and can be computed by linear
programming.
\end{proposition}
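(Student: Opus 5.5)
The plan rests on the fact that, for fixed stage-conditional effects $\bm{\Delta}=(\Delta_1,\ldots,\Delta_K)$, the map $\bm{\alpha}\mapsto\Delta_{\mathrm{PS}}(\bm{\alpha})=\bm{\alpha}^\top\bm{\Delta}$ is linear in $\bm{\alpha}$. This is immediate from the definition in equation~\eqref{eq:psnb}, because the $\Delta_k$ do not depend on the charter. The envelope problem is therefore the optimization of a linear objective over $\mathcal{A}$. Since $\mathcal{A}$ is defined by finitely many linear equalities and inequalities, it is a polyhedron. Intersecting with the standing requirements $\alpha_k\ge 0$ and $\sum_k\alpha_k=1$ places it inside the probability simplex, so it is a bounded polyhedron, that is, a polytope. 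I will state these standing requirements explicitly, since the charter is always a probability vector, and this gives compactness.

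The first step is existence. $\mathcal{A}$ is closed, as a finite intersection of closed half-spaces and hyperplanes, and bounded, since it lies in the simplex. The linear objective is continuous. Hence, if $\mathcal{A}$ is nonempty, both the minimum and the maximum are attained. The second step is attainment at extreme points. By the Minkowski--Krein--Milman theorem for polytopes, $\mathcal{A}=\mathrm{conv}\{v_1,\ldots,v_m\}$ for its finitely many vertices $v_i$. Any $\bm{\alpha}\in\mathcal{A}$ can then be written as $\sum_i\theta_i v_i$ with $\theta_i\ge0$ and $\sum_i\theta_i=1$. Linearity gives $\bm{\alpha}^\top\bm{\Delta}=\sum_i\theta_i v_i^\top\bm{\Delta}$, which lies between $\min_i v_i^\top\bm{\Delta}$ and $\max_i v_i^\top\bm{\Delta}$. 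So the optimum over $\mathcal{A}$ equals the optimum over the vertices and is attained at some vertex.

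The third step is computability. Both problems are standard linear programs with objective $\pm\bm{\Delta}^\top\bm{\alpha}$ and the given linear constraints. The simplex method, for example, returns a basic feasible solution, which is an extreme point. I will also note that in practice $\bm{\Delta}$ is replaced by its estimate $\hat{\bm{\Delta}}$, and the same argument applies verbatim. For common charter sets the vertices are explicit: the unit vectors for the plain simplex, and for the monotone set $\alpha_1\ge\cdots\ge\alpha_K$ the vectors $(1/j,\ldots,1/j,0,\ldots,0)$, $j=1,\ldots,K$. In those cases the envelope reduces to a minimum or maximum over running averages of $\Delta_k$.

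The main obstacle is only the boundedness and nonemptiness hypotheses. As literally stated, "any convex set defined by linear constraints" could be unbounded, for instance if sum-to-one or nonnegativity were dropped, and then no extreme point need exist. I will resolve this by making explicit that admissible charters are always probability vectors, so that $\mathcal{A}$ is a subset of the simplex. I will also assume $\mathcal{A}\neq\emptyset$. Non-uniqueness of the optimizer, which occurs when a face is optimal, is harmless, since attainment at some extreme point is all that is claimed.
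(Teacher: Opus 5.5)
Your proof is correct and is the argument the paper relies on implicitly. The paper gives no written proof of this proposition; it treats it as an immediate consequence of the linearity of $\bm{\alpha}\mapsto\sum_k\alpha_k\Delta_k$ and standard linear-programming facts over polytopes. Your explicit conditions supply the hypotheses the paper leaves tacit:
\begin{itemize}
\item the charter requirements $\alpha_k\ge 0$ and $\sum_k\alpha_k=1$ place $\mathcal{A}$ in the simplex, which gives boundedness;
\item the constraints are finitely many and non-strict, which makes $\mathcal{A}$ a closed polytope;
\item $\mathcal{A}$ is nonempty.
\end{itemize}
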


Proposition~\ref{prop:envelope} suggests a protocol-friendly sensitivity analysis:
rather than selecting only one or two ad hoc alternative charters, one may
prespecify an admissible region and report the full PSNB range over that region.

\paragraph{Illustrative three-layer envelope.}
Suppose a three-layer trial prespecifies the admissible set
\[
\mathcal{A}
=
\left\{
\bm{\alpha}:
\alpha_k\ge 0,\ \sum_{k=1}^3\alpha_k=1,\ 
\alpha_3\le 0.20,\ 
\alpha_1\ge\alpha_2\ge\alpha_3
\right\},
\]
and suppose the estimated stage-conditional effects are
$(\Delta_1,\Delta_2,\Delta_3)=(0.08,0.02,0.20)$.
Then the charter envelope is
\[
\underline\Delta_{\mathrm{PS}}=0.05,
\qquad
\overline\Delta_{\mathrm{PS}}=0.092,
\]
attained at the admissible charters $(0.50,0.50,0)$ and $(0.60,0.20,0.20)$,
respectively. This converts a qualitative design statement such as
``the final layer should receive no more than 20\% of the total weight and later
layers should not outweigh earlier ones'' into a concrete interval for the primary
summary.

\begin{figure}[ht]
\centering
\includegraphics[width=0.85\linewidth]{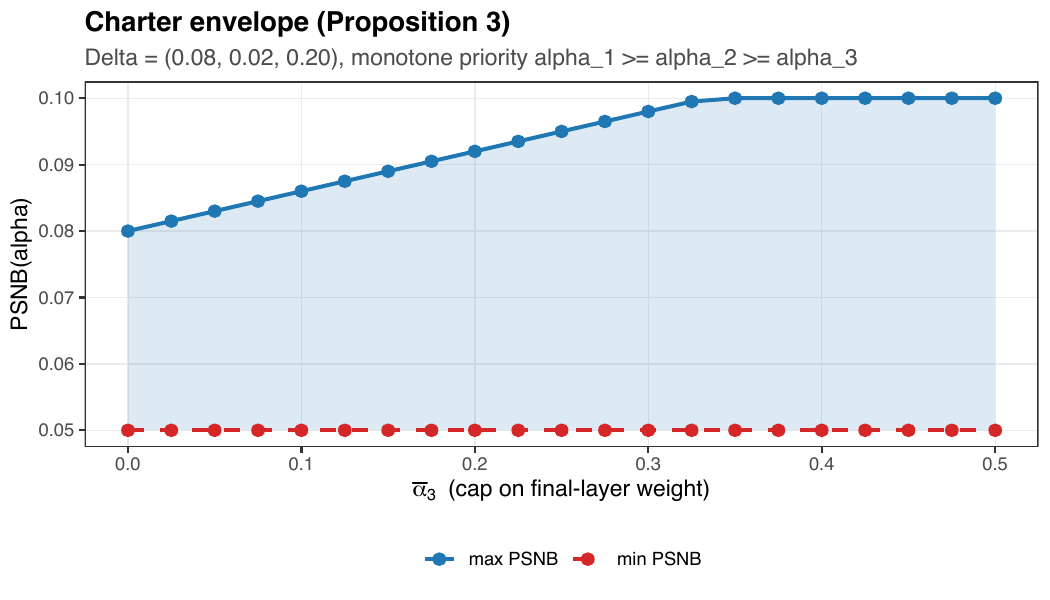}
\caption{Illustration of Proposition~\ref{prop:envelope}. The shaded band shows
$[\underline\Delta_{\mathrm{PS}}(\bar\alpha_3),\,\overline\Delta_{\mathrm{PS}}(\bar\alpha_3)]$
as the cap $\bar\alpha_3$ on the final-layer charter weight is swept from $0$
to $0.5$, with stage-conditional effects fixed at the illustrative
$(\Delta_1, \Delta_2, \Delta_3) = (0.08, 0.02, 0.20)$ and the admissible set
$\mathcal{A}$ defined by nonnegativity, sum-to-one, the cap $\alpha_3 \le \bar\alpha_3$,
and the monotone-priority constraint $\alpha_1 \ge \alpha_2 \ge \alpha_3$. The
width of the band quantifies how much disagreement remains across charters once
the stage-conditional effects are fixed; tightening $\bar\alpha_3$ narrows the band.}
\label{fig:charter_envelope}
\end{figure}

\subsection{Choosing and documenting the charter}

The charter $\bm{\alpha}$ should be chosen before unblinding, documented in the
protocol or statistical analysis plan, and locked before treatment-arm outcome
comparisons are reviewed. The charter is part of the estimand, not a tuning
parameter to be optimized after seeing efficacy results. Its justification should
address two considerations:
\begin{enumerate}[label=(\roman*)]
\item \textbf{Clinical priority:} how important is each layer in relation to the
trial objective?
\item \textbf{Measurement credibility:} how resistant is each layer to bias,
missingness, or contextual effects under the planned trial design?
\end{enumerate}

We do not prescribe universal numerical values for these weights. Charter
selection is a substantive judgment for the trial team, but it should be made
through a reproducible process. In a confirmatory trial, the process should specify
who has authority to recommend and approve the charter; what evidence is used; how
patient, clinician, statistical, and regulatory perspectives are represented; and
how disagreements are resolved before database lock. A patient-reported or
health-status endpoint may appropriately receive substantial weight when it is well
validated, prospectively defined, expected to be well observed, and collected under
conditions that support credible interpretation. Conversely, a layer that is highly
susceptible to expectation bias, differential missingness, or post-randomization
measurement artifacts should have that vulnerability reflected in its charter
weight or in the sensitivity set.

A practical workflow is:
\begin{enumerate}[leftmargin=*]
\item define the clinical role of each layer and choose baseline
clinical-priority weights;
\item identify evidence for each layer's measurement credibility, including
validation evidence, expected completeness, blinding, ascertainment procedures,
and prior trial or external-control information relevant to contextual bias;
\item apply credibility modifiers to layers judged more or less vulnerable to
bias or missingness, and renormalize to sum to one;
\item document the resulting primary charter, the rationale for every nonzero
weight, and any constraints such as upper bounds or monotone-priority rules;
\item prespecify either a small sensitivity menu or a charter envelope, together
with the inferential status of those analyses.
\end{enumerate}

The documentation should include enough information for an external reviewer to
reconstruct the decision: the hierarchy, the candidate weight set, the evidence
supporting any credibility adjustment, the resulting primary charter, the
admissible sensitivity set, and the operating characteristics used to assess the
selected design. If more than one charter is intended to support confirmatory
claims, the multiplicity strategy should be prespecified. Otherwise, sensitivity
charters and charter envelopes should be interpreted as robustness analyses rather
than additional primary tests.

\paragraph{Illustrative charter construction.}
Consider a three-layer hierarchy with baseline clinical-priority weights
$(0.50,0.30,0.20)$. If the third layer is judged more vulnerable to bias or
missingness than the first two, one might apply credibility modifiers
$(1.0,1.0,0.5)$, yielding unnormalized weights $(0.50,0.30,0.10)$. After
renormalization, the resulting charter is
\[
(0.556,\,0.333,\,0.111),
\]
which may be rounded for reporting to $(0.56,0.33,0.11)$. This example is purely
illustrative, but it shows how the charter can be constructed transparently from
a combination of clinical priority and measurement credibility.

\subsection{Priority-Standardized Win Ratio}

For readers who prefer a ratio-scale summary, define the
\emph{Priority-Standardized Win Ratio} (PSWR) by
\begin{equation}\label{eq:pswr}
\mathrm{WR}_{\mathrm{PS}}(\bm{\alpha})
=
\frac{\sum_{k=1}^K \alpha_k w_k}
     {\sum_{k=1}^K \alpha_k \ell_k},
\end{equation}
whenever the denominator is positive.

PSWR is a companion rather than a separate primary contribution. Like PSNB, it
standardizes across-layer aggregation using the charter rather than reach.
Writing
\[
\bar w=\sum_{k=1}^K\alpha_k w_k,
\qquad
\bar \ell=\sum_{k=1}^K\alpha_k \ell_k,
\]
we have
\[
\Delta_{\mathrm{PS}}=\bar w-\bar\ell,
\qquad
\mathrm{WR}_{\mathrm{PS}}=\bar w/\bar\ell.
\]
Therefore $\mathrm{WR}_{\mathrm{PS}}>1$ if and only if
$\Delta_{\mathrm{PS}}>0$.
This directional concordance does not mean that PSWR is numerically comparable to
the standard win ratio: the two summaries live on different aggregation rules.
PSWR should therefore be interpreted within a given charter, not compared
numerically with a standard WR from another trial or even with PSWR under a
different charter.

\section{Estimation and large-sample inference}

\subsection{Observed-data estimator}

Let $(Z_i,Y_i)$, $i=1,\ldots,n$, denote trial observations, with $n_1$ treated
participants and $n_0$ control participants. For each treated--control pair,
compute the prespecified stage comparison scores $c_k(Y_i,Y_j)$ and reach indicators
$R_k(Y_i,Y_j)$.

For each stage $k$, define the empirical stage contribution and reach probability by
\[
\widehat U_k
=
\frac{1}{n_1n_0}
\sum_{i:Z_i=1}\sum_{j:Z_j=0} R_k(Y_i,Y_j)c_k(Y_i,Y_j),
\qquad
\widehat r_k
=
\frac{1}{n_1n_0}
\sum_{i:Z_i=1}\sum_{j:Z_j=0} R_k(Y_i,Y_j).
\]
The stage-conditional net benefit is then estimated by
\[
\widehat\Delta_k=\frac{\widehat U_k}{\widehat r_k},
\]
and the PSNB estimator is
\[
\widehat\Delta_{\mathrm{PS}}(\bm{\alpha})
=
\sum_{k=1}^K \alpha_k\widehat\Delta_k,
\]
whenever $\widehat r_k>0$ for all stages receiving nonzero weight.

This estimator has computational complexity $O(n_1n_0K)$.
For larger trials, Monte Carlo pair subsampling or structure-exploiting
implementations may be used, as is common in generalized pairwise comparison analyses.

\paragraph{Censoring.}
The estimator is defined on the observed-data comparison functions $c_k$ chosen for
each layer. For time-to-event layers subject to administrative censoring, a practical
primary implementation is to prespecify a common horizon $\tau$ and a restricted-horizon
ordering rule: an event-free participant through $\tau$ beats a participant with the
event before $\tau$; among participants with events before $\tau$, the later event
time wins; and pairs that cannot be ordered because of censoring before $\tau$ are
handled by the prespecified rule, for example as ties. This produces a well-defined
observed-data comparison for administrative censoring and makes the horizon visible
in the estimand.

When nonadministrative or informative censoring is expected, the analysis plan should
not rely on PSNB to solve that problem. It should instead prespecify a censoring
estimand and analysis strategy, such as inverse probability of censoring weighting,
principal-stratum or hypothetical strategies, or sensitivity analyses for censoring
assumptions. PSNB changes the across-layer aggregation rule after the stage-specific
comparison rules have been defined; it is complementary to, not a replacement for,
methods that define or estimate the censored pairwise target \citep{Mao2024EstimandWR}.

\subsection{Assumptions}

\begin{assumption}[Consistency, no interference, and randomization]
\label{ass:rand}
Observed outcomes satisfy $Y=Y(Z)$, there is no interference between participants,
and randomization identifies the arm-specific outcome distributions.
\end{assumption}

\begin{assumption}[Independent sampling within arm]
\label{ass:iid}
Within each arm, observations are i.i.d.\ draws from the corresponding arm-specific
distribution, and the treated and control samples are independent.
\end{assumption}

\begin{assumption}[Positive reach for weighted stages]
\label{ass:reach}
For every stage $k$ with $\alpha_k>0$, the reach probability
$r_k=\Prob(R_k=1)$ is positive. For stable ratio inference, we assume that $r_k$
is bounded away from zero for all stages with nonzero weight.
\end{assumption}

Assumption~\ref{ass:iid} is the same working sampling structure used in standard
large-sample win-ratio inference \citep{BebuLachin2016}. In stratified randomized
trials, PSNB may be estimated within strata and combined across strata; clustered
designs would require cluster-robust extensions, which we do not pursue here.

Because $R_k\in\{0,1\}$ and $c_k\in\{-1,0,+1\}$, all kernels used below are bounded.

\subsection{Variance estimation and practical inference}

Define the stage-$k$ kernels
\[
h_k(y_1,y_0)=R_k(y_1,y_0)c_k(y_1,y_0),
\qquad
g_k(y_1,y_0)=R_k(y_1,y_0).
\]
A practical projection-based variance estimator may be computed directly from the data.

For each treated observation $i$ and stage $k$, define
\[
\widehat\varphi_{1,k}(Y_i)
=
\frac{1}{n_0}\sum_{j:Z_j=0}
\frac{h_k(Y_i,Y_j)-\widehat\Delta_k\,g_k(Y_i,Y_j)}{\widehat r_k},
\]
and for each control observation $j$ define
\[
\widehat\varphi_{0,k}(Y_j)
=
\frac{1}{n_1}\sum_{i:Z_i=1}
\frac{h_k(Y_i,Y_j)-\widehat\Delta_k\,g_k(Y_i,Y_j)}{\widehat r_k}.
\]
Aggregate these stagewise contributions using the charter:
\[
\widehat\psi_1(Y_i)=\sum_{k=1}^K \alpha_k\widehat\varphi_{1,k}(Y_i),
\qquad
\widehat\psi_0(Y_j)=\sum_{k=1}^K \alpha_k\widehat\varphi_{0,k}(Y_j).
\]
The plug-in projection variance estimator is
\begin{equation}\label{eq:plugin_var}
\widehat\sigma^2_{\mathrm{PS}}
=
\frac{n}{n_1}\widehat{\mathrm{Var}}_1\!\left[\widehat\psi_1(Y_i)\right]
+
\frac{n}{n_0}\widehat{\mathrm{Var}}_0\!\left[\widehat\psi_0(Y_j)\right],
\end{equation}
where $\widehat{\mathrm{Var}}_1$ and $\widehat{\mathrm{Var}}_0$ denote sample variances
within the treated and control arms, respectively. The resulting standard error is
\[
\widehat{\mathrm{se}}\!\left\{\widehat\Delta_{\mathrm{PS}}(\bm{\alpha})\right\}
=
\widehat\sigma_{\mathrm{PS}}/\sqrt{n}.
\]

A Wald-type $95\%$ confidence interval is then
\[
\widehat\Delta_{\mathrm{PS}}(\bm{\alpha})
\ \pm\
1.96\,
\widehat{\mathrm{se}}\!\left\{\widehat\Delta_{\mathrm{PS}}(\bm{\alpha})\right\}.
\]

We recommend reporting both:
\begin{enumerate}[leftmargin=*]
\item a projection-based Wald interval using \eqref{eq:plugin_var}, and
\item a nonparametric bootstrap interval stratified by treatment arm, resampling
participants (not pairs) within each arm.
\end{enumerate}
Because each participant contributes to many treated--control pairs, the bootstrap
should resample participants within arm rather than resampling pairwise comparisons.
In practice, at least 1{,}000 bootstrap replicates is a reasonable default.

The formulas above are directly implementable in standard statistical software, and
the appendix summarizes the participant-level computation.
Section~\ref{sec:sim_results} shows that the large-sample Wald approximation is well
calibrated in the simulation study even at $n_1=n_0=150$ per arm.

\subsection{Large-sample justification}

Let
\[
U_k=\E\{R_k(Y_1,Y_0)c_k(Y_1,Y_0)\},
\qquad
r_k=\E\{R_k(Y_1,Y_0)\},
\qquad
\Delta_k=U_k/r_k.
\]
Then $\widehat U_k$ and $\widehat r_k$ are two-sample $U$-statistics.

\begin{proposition}[Consistency and rate]
\label{prop:consistency}
Under Assumptions~\ref{ass:rand}--\ref{ass:reach},
$\widehat U_k\xrightarrow{p}U_k$ and $\widehat r_k\xrightarrow{p}r_k$ for each
weighted stage $k$, and therefore
\[
\widehat\Delta_{\mathrm{PS}}(\bm{\alpha})
\xrightarrow{p}
\Delta_{\mathrm{PS}}(\bm{\alpha}).
\]
Moreover,
\[
\widehat\Delta_{\mathrm{PS}}(\bm{\alpha})
-
\Delta_{\mathrm{PS}}(\bm{\alpha})
=
O_p(n^{-1/2}).
\]
\end{proposition}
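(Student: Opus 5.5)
The plan is to treat $\widehat U_k$ and $\widehat r_k$ as two-sample $U$-statistics with bounded kernels $h_k=R_kc_k$ and $g_k=R_k$. I will get a root-$n$ rate for each of them, and then pass to $\widehat\Delta_k=\widehat U_k/\widehat r_k$ and to the finite charter-weighted sum by a delta-method or continuous-mapping argument. Throughout I take $n\to\infty$ with $n_1/n\to\pi\in(0,1)$, the standard sampling regime implicit in the $n/n_1$, $n/n_0$ factors of \eqref{eq:plugin_var}. By Assumptions~\ref{ass:rand} and~\ref{ass:iid}, the treated sample is i.i.d.\ from $\mathcal{L}\{Y(1)\}$ and the control sample is i.i.d.\ from $\mathcal{L}\{Y(0)\}$, independently, so that $\E\widehat U_k=U_k$ and $\E\widehat r_k=r_k$.

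\textbf{Step 1 (variance bound).} For a kernel $f\in\{h_k,g_k\}$ with $|f|\le 1$, I will expand $\Var\{n_1^{-1}n_0^{-1}\sum_{i,j}f(Y_i,Y_j)\}$ into covariances of pairs $(i,j),(i',j')$. Pairs sharing no index are independent, so only those sharing a treated index, a control index, or both contribute. Counting them gives the bound
\[
\Var \le \frac{\zeta_{1}}{n_1}+\frac{\zeta_{0}}{n_0}+\frac{\zeta_{11}}{n_1n_0}\le \frac{1}{n_1}+\frac{1}{n_0}+\frac{1}{n_1n_0},
\]
where $\zeta_1=\Var\{\E(f\mid Y_1)\}$, $\zeta_0=\Var\{\E(f\mid Y_0)\}$, and $\zeta_{11}=\Var f$, all bounded by $1$. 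Chebyshev's inequality then yields $\widehat U_k-U_k=O_p(n^{-1/2})$ and $\widehat r_k-r_k=O_p(n^{-1/2})$, which in particular gives the consistency claims. This is elementary Hoeffding-style bookkeeping, so I do not need to invoke the full asymptotic normality result.

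\textbf{Step 2 (ratio).} For each weighted stage, Assumption~\ref{ass:reach} gives $r_k>0$, so $\Prob(\widehat r_k>0)\to1$ and the estimator is well defined with probability tending to one. On that event I write
\[
\widehat\Delta_k-\Delta_k=\frac{\widehat U_k-\Delta_k\widehat r_k}{\widehat r_k}=\frac{(\widehat U_k-U_k)-\Delta_k(\widehat r_k-r_k)}{\widehat r_k}.
\]
The numerator is $O_p(n^{-1/2})$ by Step 1, since $|\Delta_k|\le1$. The denominator satisfies $\widehat r_k\ge r_k/2$ with probability tending to one, so $1/\widehat r_k=O_p(1)$ and hence $\widehat\Delta_k-\Delta_k=O_p(n^{-1/2})$. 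The same display exposes the linearization $(h_k-\Delta_k g_k)/r_k$ that underlies $\widehat\varphi_{1,k}$ and $\widehat\varphi_{0,k}$. Consistency of $\widehat\Delta_k$ also follows directly from the continuous mapping theorem applied to $(u,r)\mapsto u/r$ at $r_k>0$.

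\textbf{Step 3 (aggregate).} Since $K$ is fixed and the $\alpha_k$ are nonrandom, I bound
\[
\bigl|\widehat\Delta_{\mathrm{PS}}-\Delta_{\mathrm{PS}}\bigr|\le\sum_{k:\alpha_k>0}\alpha_k\,\bigl|\widehat\Delta_k-\Delta_k\bigr|.
\]
This sum is a finite sum of $O_p(n^{-1/2})$ terms, which proves the result. Stages with $\alpha_k=0$ are simply excluded, which is why positivity of reach is needed only for weighted stages.

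The only delicate point is Step 2: the estimator is undefined when $\widehat r_k=0$. I will handle this by working on the event $\{\widehat r_k\ge r_k/2\ \text{for all weighted } k\}$, whose probability tends to one by Step 1. $O_p$ statements are insensitive to events of vanishing probability, so how the estimator is defined off this event is immaterial.
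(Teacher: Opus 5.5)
Your proof is correct and takes essentially the route the paper intends. The paper gives no separate argument beyond noting that $\widehat U_k$ and $\widehat r_k$ are two-sample $U$-statistics and appealing to standard projection theory plus a ratio step; your explicit variance bound with Chebyshev, the identity $\widehat\Delta_k-\Delta_k=\{(\widehat U_k-U_k)-\Delta_k(\widehat r_k-r_k)\}/\widehat r_k$, and the finite charter sum make that argument self-contained. You are also right to add the proportional-allocation condition $n_1/n\to\pi\in(0,1)$: the $O_p(n^{-1/2})$ rate needs it, but the proposition leaves it implicit and it appears only in Theorem~\ref{thm:clt}.
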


The asymptotic normality of PSNB follows from standard two-sample $U$-statistic
projection arguments combined with a delta-method expansion of the stagewise ratio
estimators $\widehat\Delta_k=\widehat U_k/\widehat r_k$.

\begin{theorem}[Asymptotic normality]
\label{thm:clt}
Let $p_n=n_1/n\to p\in(0,1)$ as $n\to\infty$. Under
Assumptions~\ref{ass:rand}--\ref{ass:reach},
\[
\sqrt{n}\left(
\widehat\Delta_{\mathrm{PS}}(\bm{\alpha})
-
\Delta_{\mathrm{PS}}(\bm{\alpha})
\right)
\xrightarrow{d}
N\!\left(0,\sigma^2_{\mathrm{PS}}\right),
\]
where
\[
\sigma^2_{\mathrm{PS}}
=
\frac{1}{p}\Var\{\psi_1(Y_1)\}
+
\frac{1}{1-p}\Var\{\psi_0(Y_0)\},
\]
and
\[
\psi_1(y)
=
\sum_{k=1}^K \alpha_k
\E\!\left[
\frac{h_k(y,Y_0)-\Delta_k g_k(y,Y_0)}{r_k}
\right],
\qquad
\psi_0(y)
=
\sum_{k=1}^K \alpha_k
\E\!\left[
\frac{h_k(Y_1,y)-\Delta_k g_k(Y_1,y)}{r_k}
\right].
\]
\end{theorem}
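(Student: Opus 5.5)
The plan is to linearize each stagewise ratio $\widehat\Delta_k=\widehat U_k/\widehat r_k$ around $(U_k,r_k)$, reduce $\widehat\Delta_{\mathrm{PS}}-\Delta_{\mathrm{PS}}$ to a single two-sample $U$-statistic with a bounded kernel, and then apply the Hoeffding projection for two-sample $U$-statistics. Fix a weighted stage $k$ (one with $\alpha_k>0$). By Proposition~\ref{prop:consistency}, $\widehat r_k\xrightarrow{p}r_k>0$ (Assumption~\ref{ass:reach}), so $\widehat r_k>0$ with probability tending to one. On that event the ratio identity gives
\[
\widehat\Delta_k-\Delta_k
=\frac{\widehat U_k-\Delta_k\widehat r_k}{\widehat r_k}
=\frac{\widehat U_k-\Delta_k\widehat r_k}{r_k}
+(\widehat U_k-\Delta_k\widehat r_k)\Bigl(\frac{1}{\widehat r_k}-\frac{1}{r_k}\Bigr).
\]
Write $m_k=(h_k-\Delta_k g_k)/r_k$. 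Then $\widehat V_k:=(\widehat U_k-\Delta_k\widehat r_k)/r_k$ is the two-sample $U$-statistic with kernel $m_k$, and $\E m_k(Y_1,Y_0)=(U_k-\Delta_k r_k)/r_k=0$. The remainder is a product of $\widehat U_k-\Delta_k\widehat r_k=O_p(n^{-1/2})$ and $1/\widehat r_k-1/r_k=O_p(n^{-1/2})$, both rates coming from Proposition~\ref{prop:consistency}. Hence the remainder is $o_p(n^{-1/2})$, and summing with the charter weights yields
\[
\widehat\Delta_{\mathrm{PS}}-\Delta_{\mathrm{PS}}=\widehat V+o_p(n^{-1/2}),
\]
where $\widehat V$ is the two-sample $U$-statistic with kernel $m=\sum_k\alpha_k m_k$. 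This kernel is bounded because $h_k,g_k\in\{-1,0,1\}$ and the $r_k$ are bounded away from zero.

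Next I apply the Hoeffding decomposition to $\widehat V$. Set $\psi_1(y)=\E m(y,Y_0)$ and $\psi_0(y)=\E m(Y_1,y)$, which are exactly the functions in the statement, both with mean zero. The decomposition reads
\[
\widehat V=\frac{1}{n_1}\sum_{i:Z_i=1}\psi_1(Y_i)+\frac{1}{n_0}\sum_{j:Z_j=0}\psi_0(Y_j)+D_n,
\]
where $D_n$ is the average of the doubly centered kernel $m(y_1,y_0)-\psi_1(y_1)-\psi_0(y_0)$. Under Assumption~\ref{ass:iid}, the terms of $D_n$ are uncorrelated unless they share both indices, so $\E D_n^2=O\{1/(n_1n_0)\}=O(n^{-2})$. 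Therefore $\sqrt n D_n\xrightarrow{p}0$.

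The two leading averages are independent sums of i.i.d.\ bounded mean-zero variables. By the classical CLT together with $n/n_1\to1/p$ and $n/n_0\to1/(1-p)$, $\sqrt n$ times the first sum tends to $N(0,\Var\psi_1(Y_1)/p)$ and the second to $N(0,\Var\psi_0(Y_0)/(1-p))$. Independence of the two samples lets the variances add, and Slutsky's lemma, applied to absorb $D_n$ and the delta-method remainder, gives the stated limit with $\sigma^2_{\mathrm{PS}}$.

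The main obstacle is the bookkeeping in the linearization step: the remainder must be uniformly negligible across all weighted stages, and $\widehat\Delta_k$ is undefined on the event $\widehat r_k=0$. I would handle this by defining the estimator arbitrarily on $\{\min_{k:\alpha_k>0}\widehat r_k=0\}$ and noting that this event has probability tending to zero, so it does not affect convergence in distribution. The $O_p(n^{-1/2})$ rates for $\widehat U_k-U_k$ and $\widehat r_k-r_k$ follow from the same projection argument applied to the bounded kernels $h_k$ and $g_k$, consistent with Proposition~\ref{prop:consistency}. The $L^2$ bound on $D_n$ is a routine covariance count.
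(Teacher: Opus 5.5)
Your proposal is correct and follows the route the paper indicates: a delta-method linearization of each $\widehat\Delta_k=\widehat U_k/\widehat r_k$ (done here via the exact ratio identity), then the Hoeffding projection of the resulting two-sample $U$-statistic with kernel $\sum_k\alpha_k(h_k-\Delta_k g_k)/r_k$. You also supply details the paper's one-line justification leaves implicit, namely the remainder control, the $O(1/(n_1n_0))$ bound on the degenerate term, and the $\widehat r_k=0$ event. One small point: Proposition~\ref{prop:consistency} as stated gives the $O_p(n^{-1/2})$ rate only for $\widehat\Delta_{\mathrm{PS}}$, not for $\widehat U_k$ and $\widehat r_k$ individually, so those rates come from your own variance bound for the bounded kernels $h_k,g_k$ (as you note at the end) rather than from that proposition.
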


We deliberately describe this as an \emph{influence-function-based asymptotic linear
representation}. That is the object needed for variance estimation, Wald inference,
bootstrap calibration, and future covariate-adjusted extensions. The displayed
projection terms are the quantities used in the plug-in variance estimator above.

\paragraph{PSWR.}
For the ratio-scale companion summary $\mathrm{WR}_{\mathrm{PS}}$, inference is most
naturally performed on the log scale. Specifically,
\[
\sqrt{n}\left(
\log\widehat{\mathrm{WR}}_{\mathrm{PS}}
-
\log\mathrm{WR}_{\mathrm{PS}}
\right)
\]
is asymptotically normal by the delta method applied to the charter-weighted win and
loss probabilities. Because PSWR is secondary in this paper, confirmatory inference
and operating-characteristic claims are based on PSNB; PSWR is reported as a
ratio-scale interpretive companion.

\paragraph{Scope.}
Closed-form sample size formulae for PSNB, analogous to those available for standard
win-ratio analyses \citep{Mao2022SampleSize}, are not developed here. The simulation
study instead focuses on operating characteristics, simulation-based design power, and
the efficiency--robustness trade-off induced by the charter; a closed-form design
treatment is a topic for future work.

\section{Simulation study}
\label{sec:sim}

\subsection{Aims}

The simulation study is organized around the properties that most clearly distinguish PSNB
from standard hierarchical summaries. The experiments address seven practical questions:
\begin{enumerate}[leftmargin=*]
\item Does projection-based Wald inference for PSNB maintain nominal type I error?
\item If stage-conditional effects are held fixed while reach changes, does PSNB remain
stable while standard win summaries move?
\item Do threshold or tie-rule changes upstream mechanically shift a standard composite
toward the last layer, even when the last-layer effect is unchanged?
\item How quickly does spurious rejection arise when a dominant last layer is subject to
open-label bias?
\item How sensitive is a hierarchical composite to differential missingness in the
last layer, and how does that sensitivity change as the last-layer weight is reduced?
\item What is the power cost of robustness when the last-layer weight is capped?
\item Can a prespecified bias budget translate an expected magnitude of late-layer bias
into a maximum last-layer charter weight, and does the resulting weight keep spurious
rejection near nominal in simulation?
\end{enumerate}

\subsection{Data-generating mechanism}
\label{sec:dgm}

Unless otherwise stated, simulations use a three-layer hierarchy chosen to resemble a broad
class of contemporary clinical-trial composites:
\begin{enumerate}[label=(\roman*)]
\item a hard time-to-event component through a fixed horizon $\tau$,
\item an intermediate count or recurrent-event component through $\tau$, and
\item a late health-status or PRO responder component at $\tau$.
\end{enumerate}

Participants are generated independently within arm. To induce realistic within-participant
dependence across layers, each participant has an unobserved severity variable
$S\sim N(0,1)$ such that worse severity increases event risk, increases count burden,
and reduces the probability of late response. Treatment effects act through layer-specific
parameters and may differ by scenario.

\paragraph{Layer 1: hard event.}
Let $T_z$ denote a potential event time under arm $z\in\{0,1\}$.
Conditional on severity, $T_z$ follows an exponential model with hazard
$\lambda_z(S)=\lambda_{0z}\exp(\eta_1 S)$.
Comparisons are truncated at $\tau$: event-free through $\tau$ beats an event before $\tau$,
and later event time is better when both have events.

\paragraph{Layer 2: intermediate event burden.}
Let $H_z$ denote a count outcome through $\tau$ with
$H_z\mid S\sim \mathrm{Poisson}\{\mu_z\exp(\eta_2 S)\}$.
Comparisons favor fewer events; a tie margin $m_H\ge 0$ may be introduced so that
differences within $\pm m_H$ are treated as ties.

\paragraph{Layer 3: late health-status or PRO layer.}
Let $K_z$ denote a continuous change score with
$K_z\mid S\sim N(\mu_{K,z}-\eta_3 S,\sigma_K^2)$.
The layer is analyzed either as a responder indicator
$\Ind\{K_z\ge \delta_{\mathrm{resp}}\}$ or via a margin rule on pairwise differences.
Open-label bias is introduced by adding a treated-arm shift $b$ to the observed late-layer
score. Differential missingness is introduced by a logistic model for the probability that
the late-layer score is observed.

\paragraph{Charters.}
Across experiments we use two baseline three-layer charters:
\[
\bm{\alpha}^{(A)}=(0.50,0.30,0.20),
\qquad
\bm{\alpha}^{(B)}=(0.60,0.30,0.10).
\]
These are not recommended defaults. They serve only to illustrate moderate versus more
conservative weighting of the final layer.

For the open-label bias calibration in Experiment~D, we also use two
bias-budget charters. Suppose the late patient-reported layer is compared using a
continuous margin $\delta$, has within-arm standard deviation $\sigma$, and could
be affected by an additive treated--control pairwise bias $b$. Under a normal
working approximation for the pairwise difference, the bias-induced apparent
late-layer net benefit is
\[
d_3(b)
=
\Phi\left\{\frac{b-\delta}{\sigma\sqrt{2}}\right\}
-
\Phi\left\{\frac{-b-\delta}{\sigma\sqrt{2}}\right\}.
\]
If the design tolerance for the maximum estimand-scale contribution of this bias is
$\tau$, then a transparent cap is
\[
\alpha_3 \le \min\{1,\tau/d_3(b)\}
\quad\text{when } d_3(b)>0.
\]
The treatment-only charter $\bm{\alpha}^{(C)}=(0.57,0.38,0.05)$ uses a Layer-3 cap
of $5\%$, consistent with $\delta=5$, $\sigma=10$, $b=5$, and a bias budget of about
$0.013$ on the PSNB scale. The compounded-bias charter
$\bm{\alpha}^{(D)}=(0.588,0.392,0.02)$ uses a Layer-3 cap of $2\%$, consistent with
$\delta=5$, $\sigma=10$, $b=10$, and a bias budget of about $0.010$. These numerical
values are design examples, not universal recommendations; in an actual trial,
$b$, $\delta$, $\sigma$, and $\tau$ should be justified from prespecified evidence
and documented with the charter. In both charters, the Layer~1:Layer~2 ratio is kept
at $60{:}40$ after fixing the Layer-3 cap.

\subsection{Estimands and comparators}

Each simulated trial reports:
\begin{enumerate}[leftmargin=*]
\item standard reach-weighted net benefit $\Delta$,
\item standard win ratio $\mathrm{WR}$,
\item a fixed weighted win--loss summary
$\Delta_{\bm{\beta}}=\sum_k\beta_k r_k\Delta_k$ with
$\bm{\beta}=(0.50,0.30,0.20)$ for direct comparison,
\item PSNB under charters $\bm{\alpha}^{(A)}$ and $\bm{\alpha}^{(B)}$, with
$\bm{\alpha}^{(C)}$ and $\bm{\alpha}^{(D)}$ added in the open-label bias stress test,
\item PSWR under the same primary charters,
\item the last-layer reach probability $r_K$, and
\item the last-layer share $U_K/\Delta$ when $\Delta\neq 0$.
\end{enumerate}

Empirical rejection probabilities are based on two-sided large-sample Wald tests at
nominal level $0.05$ using participant-level projection variances. For the standard
WR comparator, the tested null is no treated--control win imbalance,
$W=L$ (equivalently $\mathrm{WR}=1$ when both win and loss probabilities are
positive), implemented on the net-benefit scale $W-L=0$ for the rejection-rate
comparisons. For PSNB, the tested null is
$\Delta_{\mathrm{PS}}(\bm{\alpha})=0$ under the prespecified charter. Bootstrap
calculations in Experiment~A are used as an interval-width check, not as the primary
testing method.

\subsection{Experiment A: Global-null calibration}

Under the global null, treatment has no effect on any layer. We evaluate
empirical type I error for two-sided Wald tests at nominal level $0.05$ over a
sample-size sweep $n_1=n_0\in\{75,150,300,600\}$, with $2{,}000$ Monte Carlo
replicates at each cell for $n\le 300$ and $1{,}000$ replicates at $n=600$.
In addition, we run a bootstrap validation block at $n_1=n_0=150$ ($100$
replicates, $B=300$ stratified nonparametric bootstrap resamples per
replicate) comparing the median half-width of the projection-based Wald
confidence interval implied by Theorem~\ref{thm:clt} to the median half-width
of the bootstrap interval, for both charters $\bm{\alpha}^{(A)}$ and
$\bm{\alpha}^{(B)}$.

\subsection{Experiment B: Reach variation with fixed stage-conditional effects}

This experiment is designed specifically to illustrate
Corollary~\ref{cor:reach_invariance}. Parameters are numerically calibrated so that the
stage-conditional effects $(\Delta_1,\Delta_2,\Delta_3)$ are approximately stable across
settings, while control-arm hard-event rarity and upper-layer tie frequency are varied to
change the reach distribution. We then compare how standard net benefit, fixed weighted
win--loss, and PSNB move across the sweep.

\subsection{Experiment C: Threshold- and tie-induced spillover}

To isolate the effect of upstream tie rules, we hold the treatment effect at the final
layer fixed and vary the tie margin in the intermediate layer. This increases or decreases
the flow of pairs to the final layer without changing the final-layer effect itself.
The goal is to quantify how much of a change in the composite summary is due to the altered
tie rule rather than to any change in clinical benefit.

\subsection{Experiment D: Open-label bias stress test}

Under a global null for the hard and intermediate layers and no true late-layer treatment
effect, we add a treated-arm late-layer bias shift
$b\in\{0,2.5,5,7.5,10\}$.
At $n_1=n_0=175$ and 1500 replicates per setting, we estimate the rejection probability for
standard WR and for PSNB under the two baseline charters and the two bias-budget
charters. The shift $b$ is interpreted as the induced treated--control pairwise
separation on the late layer; for the rejection-rate calculation, only this pairwise
separation matters, not whether it arises from treatment expectation, control
disappointment, or both.

\subsection{Experiment E: Differential missingness stress test}

We generate late-layer missingness from a logistic model depending on arm, severity, and the
latent late-layer potential outcome. Missingness rates are chosen so that the final layer is
frequently reached but variably observed. The primary analysis treats missing late-layer values
as ties, thereby isolating the across-layer aggregation question from any additional imputation
model. The goal is comparative rather than absolute: how strongly do standard hierarchical
summaries versus PSNB respond when the dominant last layer becomes differentially incomplete?

\subsection{Experiment F: Efficiency--robustness frontier and design power}

Under alternative scenarios with genuine treatment benefit, we first vary the maximum
allowable last-layer weight $\alpha_3\le\bar\alpha_3$ over a prespecified grid and
estimate empirical power at $n_1=n_0=175$. This frontier quantifies the cost of
robustness when the final layer is both highly informative and potentially less
credible. We then perform a simulation-based design-power sweep over
$n_1=n_0\in\{175,250,350,500,650,800\}$ for the standard Win Ratio and for the two
baseline PSNB charters. The broad-benefit profile uses a hard-event hazard ratio of
$0.80$, an intermediate-event rate ratio of $0.85$, and a Layer-3 mean shift of $2$
points. The final-layer-dominated profile has no effect on Layers~1--2 and a Layer-3
mean shift of $4$ points.

\subsection{Experiment G: Bias-budget calibration of the last-layer weight}

Experiment~D showed that the two bias-budget charters, $\bm{\alpha}^{(C)}$ and
$\bm{\alpha}^{(D)}$, control spurious rejection at their target bias benchmarks.
Experiment~G makes the underlying construction explicit and examines it directly.
Recall the bias-budget expression used in Section~\ref{sec:dgm} to define those
charters: under a normal working model for the late-layer pairwise difference with
margin $\delta$ and within-arm standard deviation $\sigma$, an additive
treated--control bias $b$ induces an apparent late-layer net benefit
$d_3(b)=\Phi\{(b-\delta)/(\sigma\sqrt2)\}-\Phi\{(-b-\delta)/(\sigma\sqrt2)\}$, and a
tolerance $\tau$ on the estimand-scale contribution of that bias yields the cap
$\bar\alpha_3=\min\{1,\tau/d_3(b)\}$. We (i) trace $\bar\alpha_3$ as a function of the
expected bias $b$ for tolerances $\tau\in\{0.01,0.025,0.05,0.10\}$ with $\delta=5$ and
$\sigma=10$, and (ii) run a companion calibration simulation that sweeps the last-layer
weight $\alpha_3$ over a grid against bias levels
$b\in\{0,2.5,3.6,5,7.5,10\}$ at $n_1=n_0=175$ with $600$ replicates per cell, recording
the empirical rejection rate under the global null. The two together separate the
analytical design heuristic (i) from its operating-characteristic check (ii).

\section{Simulation results}
\label{sec:sim_results}

\subsection{Experiment A: Type I error}

Table~\ref{tab:type1} reports empirical type I error under the global null
across the full sample-size sweep. Every cell is within two Monte Carlo
standard errors of nominal $0.05$: rejection rates for the standard Win
Ratio lie in $[0.053, 0.057]$, PSNB$(\bm{\alpha}^{(A)})$ in $[0.048, 0.055]$,
and PSNB$(\bm{\alpha}^{(B)})$ in $[0.048, 0.054]$. Calibration holds down
to $n_1=n_0=75$ per arm without degradation, supporting the use of the
projection-based Wald approximation of Theorem~\ref{thm:clt} at realistic
trial sample sizes.

\begin{table}[ht]
\centering
\caption{Empirical type I error under the global null at two-sided $\alpha = 0.05$, across a sample-size sweep. Rejection rates are shown with Monte Carlo standard errors; all entries lie within two Monte Carlo SEs of nominal 0.05.}
\label{tab:type1}
\begin{tabular}{@{}lllll@{}}
\toprule
method & $n$ per arm & rejection & MC SE & $n_{\mathrm{rep}}$ \\
\midrule
WR & 75 & 0.053 & 0.005 & 2000 \\
PSNB(0.50,0.30,0.20) & 75 & 0.048 & 0.005 & 2000 \\
PSNB(0.60,0.30,0.10) & 75 & 0.049 & 0.005 & 2000 \\
WR & 150 & 0.057 & 0.005 & 2000 \\
PSNB(0.50,0.30,0.20) & 150 & 0.053 & 0.005 & 2000 \\
PSNB(0.60,0.30,0.10) & 150 & 0.048 & 0.005 & 2000 \\
WR & 300 & 0.055 & 0.005 & 2000 \\
PSNB(0.50,0.30,0.20) & 300 & 0.053 & 0.005 & 2000 \\
PSNB(0.60,0.30,0.10) & 300 & 0.054 & 0.005 & 2000 \\
WR & 600 & 0.055 & 0.007 & 1000 \\
PSNB(0.50,0.30,0.20) & 600 & 0.055 & 0.007 & 1000 \\
PSNB(0.60,0.30,0.10) & 600 & 0.051 & 0.007 & 1000 \\
\bottomrule
\end{tabular}
\end{table}

\paragraph{Bootstrap--Wald variance validation.}
Because Theorem~\ref{thm:clt} relies on a projection-based variance
estimator, we also compare the closed-form Wald half-width against a
stratified nonparametric bootstrap half-width under the global null.
Table~\ref{tab:bootstrap_vs_wald} reports the median of each quantity over
$100$ replicates at $n_1=n_0=150$ with $B=300$ bootstrap resamples per
replicate. The ratio of bootstrap to Wald half-widths is $0.98$ for charter
$\bm{\alpha}^{(A)}$ and $0.97$ for charter $\bm{\alpha}^{(B)}$: the
projection variance estimator agrees with the nonparametric bootstrap to
within Monte Carlo noise, consistent with the asymptotic linear
representation in Theorem~\ref{thm:clt}.

\begin{table}[ht]
\centering
\caption{Bootstrap--Wald variance validation at $n = 150$ per arm, 100 replicates, $B = 300$ bootstrap resamples under the global null. Each charter's projection-based Wald half-width is compared to the stratified nonparametric bootstrap half-width; ratios near 1.00 indicate that Theorem~1's projection variance estimator agrees with the bootstrap to within Monte Carlo noise.}
\label{tab:bootstrap_vs_wald}
\begin{tabular}{@{}lllll@{}}
\toprule
charter & median estimate & median Wald half & median boot half & boot / Wald \\
\midrule
A & -0.0040 & 0.0650 & 0.0637 & 0.9811 \\
B & -0.0030 & 0.0655 & 0.0633 & 0.9652 \\
\bottomrule
\end{tabular}
\end{table}

\subsection{Experiment B: Reach variation with fixed stage-conditional effects}

Figure~\ref{fig:reach_invariance} is the core identifying simulation for the
paper. A coordinate-wise bisection calibration is run on the data-generating
mechanism so that the stage-conditional effects $(\hat\Delta_1,
\hat\Delta_2, \hat\Delta_3)$ sit at the target $(0.05, 0.05, 0.08)$ within a
tolerance of $0.005$ in every scenario, while control-arm hard-event rarity
and upper-layer tie frequency vary across scenarios to change the last-layer
reach probability. The sweep covers $r_3$ values ranging from $0.079$ in the
low-reach scenario to $0.853$ in the high-reach scenario, an order of
magnitude. Across this range, the standard net benefit $\Delta$ moves by
$0.099$ and the fixed weighted win--loss summary $\Delta_{\bm{\beta}}$ moves
by $0.024$, tracking the reach distribution as predicted by
equation~\eqref{eq:standard_reach_weight} and
Proposition~\ref{prop:weighted_reach}. PSNB, by contrast, moves by at most
$0.006$ for both charters---roughly $16\times$ less than $\Delta$---which is
the numerical content of Corollary~\ref{cor:reach_invariance}.

\begin{figure}[ht]
\centering
\includegraphics[width=0.92\linewidth]{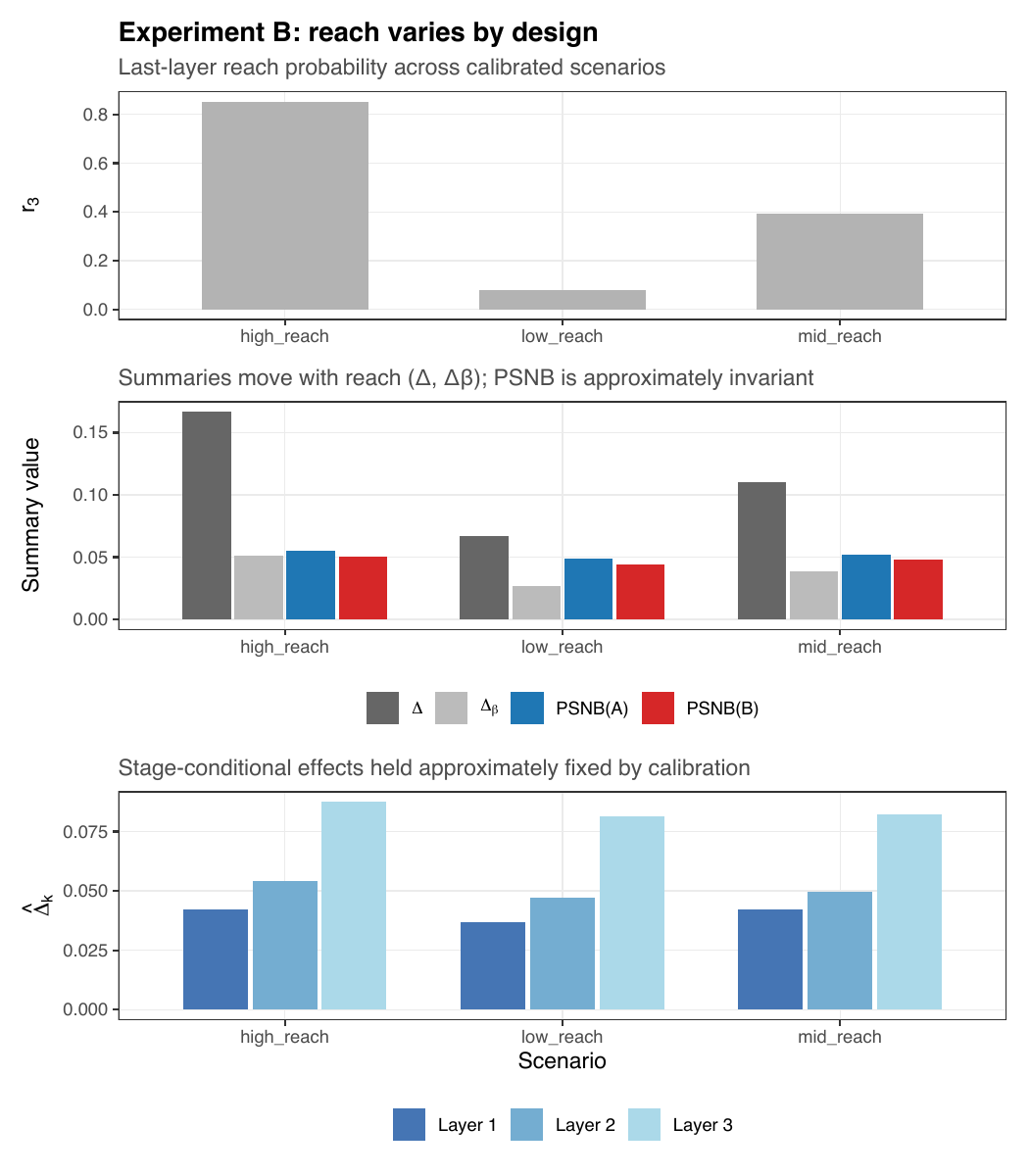}
\caption{Experiment B: standard net benefit $\Delta$ and fixed weighted
win--loss $\Delta_{\bm{\beta}}$ move with the last-layer reach probability
$r_3$, whereas PSNB remains approximately invariant when stage-conditional
effects are held fixed. Top panel: $r_3$ by scenario. Middle panel: $\Delta$,
$\Delta_{\bm{\beta}}$, and the two PSNB charters by scenario. Bottom panel:
stage-conditional effects $\hat\Delta_k$ confirming the calibration target is
met across scenarios.}
\label{fig:reach_invariance}
\end{figure}

Table~\ref{tab:reach_invariance} reports the numerical summary.

\begin{table}[ht]
\centering
\caption{Experiment B: reach-invariance summary. After calibration, the stage-conditional effects $(\hat\Delta_1, \hat\Delta_2, \hat\Delta_3)$ are held approximately at target across scenarios with markedly different last-layer reach $r_3$. Standard summaries $\Delta$ and $\Delta_\beta$ move with $r_3$; PSNB is approximately invariant (Corollary~1).}
\label{tab:reach_invariance}
\begin{tabular}{@{}llllllllll@{}}
\toprule
scenario & $r_3$ & $\hat\Delta_1$ & $\hat\Delta_2$ & $\hat\Delta_3$ & $\Delta$ & $\Delta_\beta$ & PSNB(A) & PSNB(B) & $n_{\mathrm{rep}}$ \\
\midrule
low\_reach & 0.079 & 0.037 & 0.047 & 0.081 & 0.067 & 0.027 & 0.049 & 0.044 & 1000 \\
mid\_reach & 0.391 & 0.042 & 0.050 & 0.082 & 0.110 & 0.038 & 0.052 & 0.048 & 1000 \\
high\_reach & 0.853 & 0.042 & 0.054 & 0.088 & 0.167 & 0.051 & 0.055 & 0.050 & 1000 \\
\bottomrule
\end{tabular}
\end{table}

\paragraph{Take-away.}
The results make visually obvious what
Proposition~\ref{prop:weighted_reach} shows algebraically: fixed weighted
win--loss attenuates or amplifies stages differently, but it does
\emph{not} remove the stochastic role of reach. PSNB does. The residual
PSNB variation of $\sim 0.006$ across scenarios reflects the finite
calibration tolerance of the data-generating mechanism and the fact that
the realised $\hat\Delta_1$ slightly undershoots the target of $0.05$ in
the lowest-reach scenario; it is not an indication that PSNB itself depends
on reach.

\subsection{Experiment C: Threshold- and tie-induced spillover}

Figure~\ref{fig:spillover} and Table~\ref{tab:spillover} show how changing an upstream
tie rule alters the overall composite even when the final-layer effect is unchanged.
Sweeping the intermediate-layer tie margin over $m_H\in\{0,1,\ldots,5\}$ drives the
final-layer reach from $r_3 = 0.142$ at $m_H=0$ all the way up to $r_3 = 0.720$ at
$m_H=5$. Over the same sweep the standard net benefit moves from $\Delta = 0.020$ to
$\Delta = 0.100$---a fivefold change---and the last-layer share $U_3/\Delta$ stays
essentially pinned at $1$ (range $[0.992, 1.038]$), i.e.\ virtually \emph{all} of
the standard summary is driven by Layer~3 once ties upstream are permitted. The fixed
weighted win--loss summary $\Delta_\beta$ tracks $\Delta$ in the same direction,
moving from $0.004$ to $0.020$. By contrast, PSNB is essentially flat: PSNB$(\bm{\alpha}^{(A)})$
moves by only $0.0018$ across the entire sweep and PSNB$(\bm{\alpha}^{(B)})$ moves by
at most $0.002$.

\begin{figure}[ht]
\centering
\includegraphics[width=0.92\linewidth]{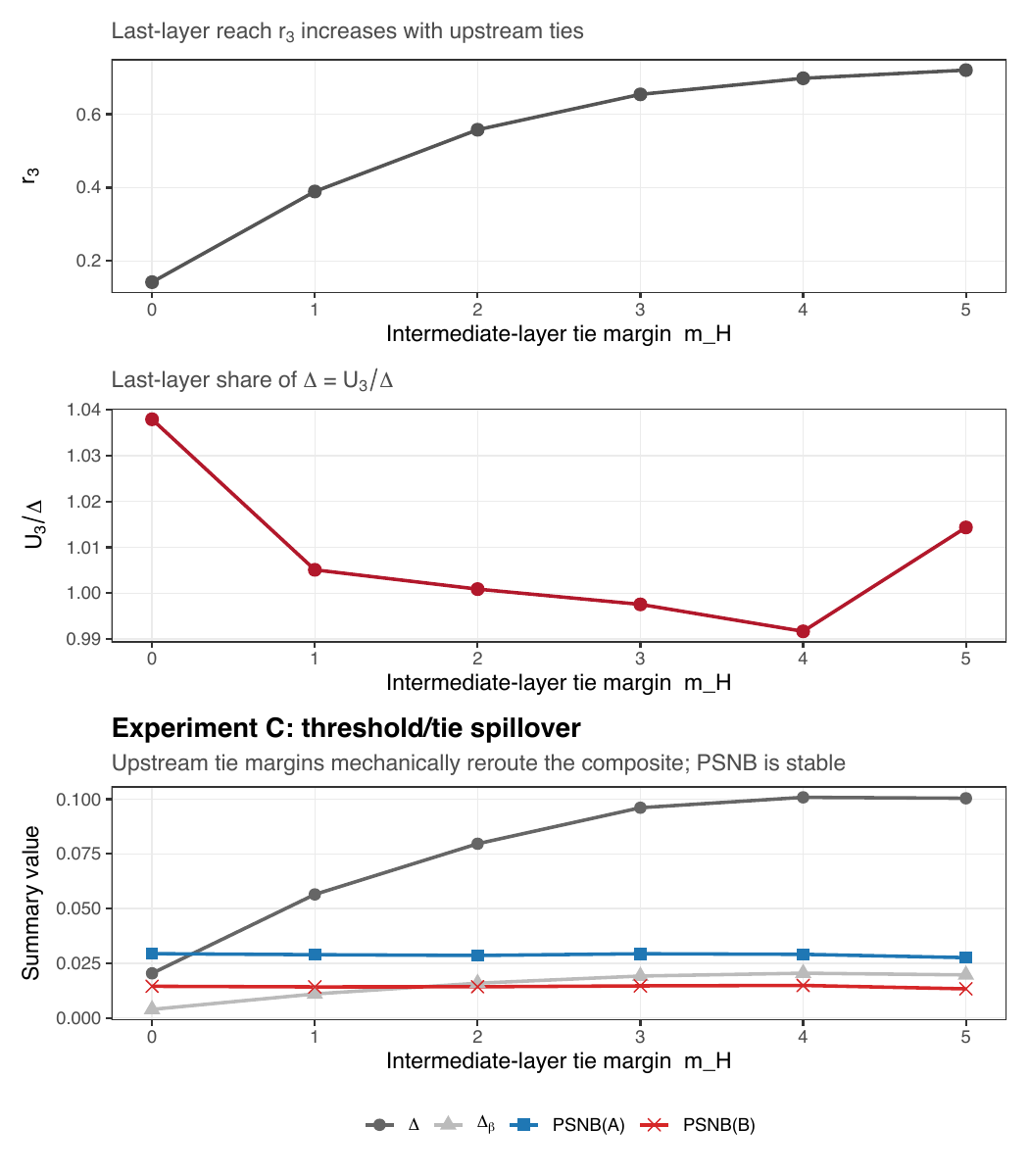}
\caption{Experiment C: threshold- and tie-induced spillover to the final layer.
Sweeping the intermediate-layer tie margin $m_H\in\{0,\ldots,5\}$ progressively
reroutes pairs to Layer~3. The panels show: (i) the final-layer reach $r_3$ rising
from $0.142$ to $0.720$; (ii) the last-layer share of standard net benefit
$U_3/\Delta$ remaining near $1$; (iii) $\Delta$, $\Delta_\beta$, and PSNB under
charters $\bm{\alpha}^{(A)}$ and $\bm{\alpha}^{(B)}$ as a function of $m_H$.
PSNB is stable, while the standard summary is driven almost entirely by the
upstream tie rule.}
\label{fig:spillover}
\end{figure}

\begin{table}[ht]
\centering
\caption{Experiment C: upstream tie spillover. Sweeping the intermediate-layer tie margin $m_H$ routes more pairs to Layer~3, shifting the standard summary $\Delta$ and the last-layer share $U_3/\Delta$ toward Layer~3 even though the true Layer-3 effect is unchanged. PSNB is stable.}
\label{tab:spillover}
\begin{tabular}{@{}llllllll@{}}
\toprule
$m_H$ & $r_3$ & $U_3/\Delta$ & $\Delta$ & $\Delta_\beta$ & PSNB(A) & PSNB(B) & $n_{\mathrm{rep}}$ \\
\midrule
0 & 0.142 & 1.038 & 0.020 & 0.004 & 0.029 & 0.014 & 1000 \\
1 & 0.389 & 1.005 & 0.056 & 0.011 & 0.029 & 0.014 & 1000 \\
2 & 0.558 & 1.001 & 0.080 & 0.016 & 0.029 & 0.014 & 1000 \\
3 & 0.654 & 0.998 & 0.096 & 0.019 & 0.029 & 0.015 & 1000 \\
4 & 0.698 & 0.992 & 0.101 & 0.020 & 0.029 & 0.015 & 1000 \\
5 & 0.720 & 1.014 & 0.100 & 0.020 & 0.028 & 0.013 & 1000 \\
\bottomrule
\end{tabular}
\end{table}

\paragraph{Take-away.}
This experiment isolates a practical phenomenon emphasized in the discussion section:
interpretability problems are not only about the last layer itself. They can also be created
upstream when thresholds or margins generate more ties and therefore mechanically reroute the
hierarchy toward the final endpoint. A purely unweighted hierarchical summary is quantitatively
at the mercy of the tie rule chosen for the intermediate layer. A charter-based PSNB summary
is not---its value is determined by the prespecified clinical priorities, not by a design
choice that has nothing to do with clinical meaning.

\subsection{Experiment D: Open-label bias stress test}

Figure~\ref{fig:bias} and Table~\ref{tab:bias_rejection} report empirical
rejection probabilities under increasing treated-arm Layer-3 bias, in two
final-layer reach regimes ($r_3\approx 0.14$ in the low-reach scenario and
$r_3\approx 0.25$ in the high-reach scenario). Both regimes use $n_1=n_0=175$
and $1500$ replicates per cell. At $b=0$ all methods reject near the nominal
$0.05$ level. As the bias shift grows, rejection increases for all methods, but
the ranking of the methods is informative.

The standard Win Ratio becomes markedly more sensitive as the biased later layer
is reached more often: at $b=10$, its rejection rate increases from $0.196$ in
the low-reach regime to $0.449$ in the high-reach regime. By contrast, PSNB's
bias sensitivity is governed primarily by the charter rather than by reach.
For charter $\bm{\alpha}^{(A)}=(0.50,0.30,0.20)$, the rejection rate is
$0.877$ in the low-reach regime and $0.885$ in the high-reach regime at $b=10$;
for charter $\bm{\alpha}^{(B)}=(0.60,0.30,0.10)$, the corresponding rates are
$0.316$ and $0.319$. Thus, once the charter is fixed, PSNB's sensitivity to
Layer-3 bias is nearly invariant to the reach regime, whereas the Win Ratio's
sensitivity changes sharply with reach.

Reducing the Layer-3 charter weight from $0.20$ to $0.10$ lowers PSNB's
rejection probability by roughly a factor of two to three at every bias level in
both reach regimes. For example, in the high-reach regime at $b=5$, the
rejection rate is $0.350$ for PSNB$(\bm{\alpha}^{(A)})$ and $0.115$ for
PSNB$(\bm{\alpha}^{(B)})$; at $b=10$, the corresponding rates are $0.885$ and
$0.319$. The same pattern appears in the low-reach regime ($0.345$ versus
$0.119$ at $b=5$, and $0.877$ versus $0.316$ at $b=10$), numerically confirming
the layer-influence cap of Corollary~\ref{cor:cap}.

The bias-budget charters make the same point more prescriptively. Charter~C,
which caps Layer~3 at $5\%$, keeps rejection close to nominal at the prespecified
$b=5$ benchmark ($0.059$ in the high-reach regime and $0.079$ in the low-reach
regime). Charter~D, which caps Layer~3 at $2\%$, keeps rejection close to nominal
even at the $b=10$ compounded-bias benchmark ($0.053$ and $0.061$, respectively).
These results illustrate how a bias budget can be translated into a charter cap
before unblinding.

Practically, PSNB should not be read as automatically attenuating late-layer bias
relative to standard reach-weighted summaries. It makes that sensitivity a function
of the charter chosen before unblinding.
Whether PSNB is less or more sensitive than the Win Ratio depends on how the
chosen Layer-3 charter weight compares with the effective influence that the
same layer would have under the standard reach-weighted analysis.

\begin{figure}[ht]
\centering
\includegraphics[width=0.95\linewidth]{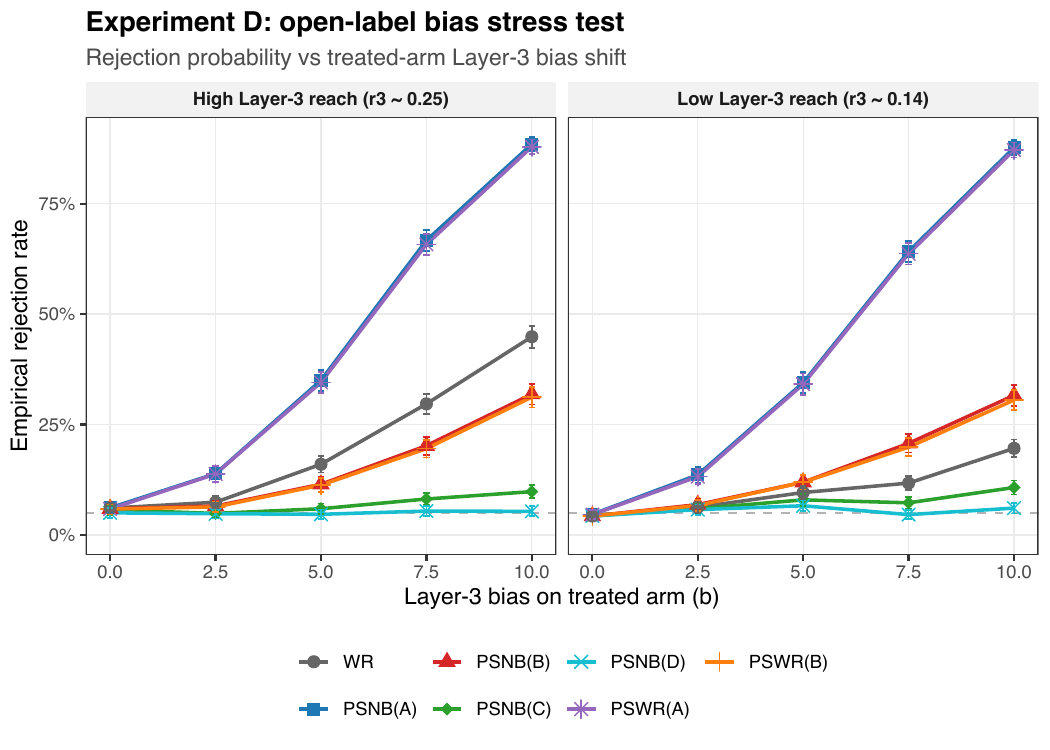}
\caption{Experiment D: spurious rejection under treated-arm Layer-3 bias.
Empirical two-sided rejection rate at nominal $\alpha = 0.05$ as a function of
the bias shift $b$, faceted by reach regime ($r_3 \approx 0.14$ in low-reach,
$r_3 \approx 0.25$ in high-reach). The Win Ratio's bias sensitivity grows with
the reach regime, whereas PSNB's sensitivity is largely governed by its
Layer-3 charter weight.}
\label{fig:bias}
\end{figure}

\begin{table}[ht]
\centering
\caption{Experiment D: open-label bias stress test. Empirical two-sided rejection rates at nominal $\alpha = 0.05$ as a function of the treated-arm Layer-3 bias shift $b$, in both a low-reach regime ($r_3 \approx 0.14$) and a high-reach regime ($r_3 \approx 0.25$). Each cell reports rejection rate (Monte Carlo SE). Smaller Layer-3 charter weights attenuate spurious rejection under measurement bias; the bias-budget charters PSNB(C) and PSNB(D) show how prespecified Layer-3 caps can limit transmission of 5- and 10-point bias shifts, respectively.}
\label{tab:bias_rejection}
\begin{tabular}{@{}lllllll@{}}
\toprule
regime & $b$ & WR & PSNB(A) & PSNB(B) & PSNB(C) & PSNB(D) \\
\midrule
high\_reach & 0.000 & 0.061 (0.006) & 0.061 (0.006) & 0.059 (0.006) & 0.057 (0.006) & 0.050 (0.006) \\
high\_reach & 2.500 & 0.074 (0.007) & 0.139 (0.009) & 0.065 (0.006) & 0.049 (0.006) & 0.048 (0.006) \\
high\_reach & 5.000 & 0.160 (0.009) & 0.350 (0.012) & 0.115 (0.008) & 0.059 (0.006) & 0.047 (0.005) \\
high\_reach & 7.500 & 0.297 (0.012) & 0.667 (0.012) & 0.202 (0.010) & 0.081 (0.007) & 0.054 (0.006) \\
high\_reach & 10.000 & 0.449 (0.013) & 0.885 (0.008) & 0.319 (0.012) & 0.098 (0.008) & 0.053 (0.006) \\
low\_reach & 0.000 & 0.043 (0.005) & 0.047 (0.005) & 0.043 (0.005) & 0.044 (0.005) & 0.043 (0.005) \\
low\_reach & 2.500 & 0.063 (0.006) & 0.136 (0.009) & 0.069 (0.007) & 0.059 (0.006) & 0.057 (0.006) \\
low\_reach & 5.000 & 0.096 (0.008) & 0.345 (0.012) & 0.119 (0.008) & 0.079 (0.007) & 0.066 (0.006) \\
low\_reach & 7.500 & 0.117 (0.008) & 0.641 (0.012) & 0.207 (0.010) & 0.073 (0.007) & 0.046 (0.005) \\
low\_reach & 10.000 & 0.196 (0.010) & 0.877 (0.008) & 0.316 (0.012) & 0.107 (0.008) & 0.061 (0.006) \\
\bottomrule
\end{tabular}
\end{table}

\paragraph{Interpretation.}
This experiment refines the main robustness claim of the paper. PSNB does not
``fix'' bias in the last layer, nor does it guarantee attenuation relative to the
standard Win Ratio. It changes the source of that sensitivity from the empirical
reach distribution to the charter.
In the present simulations, the Win Ratio's bias sensitivity increases sharply as
$r_3$ increases, whereas PSNB's bias sensitivity is essentially unchanged across
reach regimes once the charter is fixed. Smaller later-layer weights attenuate
that sensitivity; larger later-layer weights amplify it. The bias-budget charters
therefore provide a direct way to translate concern about a plausible late-layer
bias into a prespecified cap on that layer's contribution to the primary estimand.

\subsection{Experiment E: Differential missingness stress test}

Figure~\ref{fig:missingness} and Table~\ref{tab:missingness_summary} report the
differential-missingness stress test. We sweep three scenarios---balanced
(treated/control completion $0.95/0.95$), moderate ($0.74/0.90$), and severe
($0.59/0.85$)---in a high-reach Layer-3 setting where the final layer is
frequently the deciding layer. Missingness is handled by treating missing
late-layer values as ties, so that the aggregation question is isolated from
additional imputation assumptions. For each scenario we compute the bias of each
summary relative to the complete-data truth.

Across all three scenarios, the ordering is stable: the unweighted Win Ratio has
the largest absolute bias, standard net benefit is intermediate, and PSNB becomes
less sensitive as the Layer-3 charter weight is reduced. In the severe scenario,
the empirical biases are
$\widehat{\mathrm{bias}}(\Delta)=-0.019$,
$\widehat{\mathrm{bias}}(\mathrm{WR})=-0.034$,
$\widehat{\mathrm{bias}}(\mathrm{PSNB}(\bm{\alpha}^{(A)}))=-0.015$, and
$\widehat{\mathrm{bias}}(\mathrm{PSNB}(\bm{\alpha}^{(B)}))=-0.010$.
PSNB$(\bm{\alpha}^{(B)})$, which carries the smallest Layer-3 charter weight,
has roughly half the absolute bias of $\Delta$ and about one third of the bias of
the Win Ratio.

The same pattern appears in the balanced and moderate scenarios. PSNB$(\bm{\alpha}^{(B)})$
has the smallest absolute bias in every scenario, and its bias varies only modestly
across the missingness sweep (from $-0.006$ to $-0.010$), whereas the Win Ratio
remains the most sensitive summary throughout. As in Experiment~D, the important
point is not that PSNB removes missing-data bias, but that it limits how strongly
a differentially incomplete later layer can affect the \emph{primary composite}
once the charter is fixed.

\begin{figure}[ht]
\centering
\includegraphics[width=0.92\linewidth]{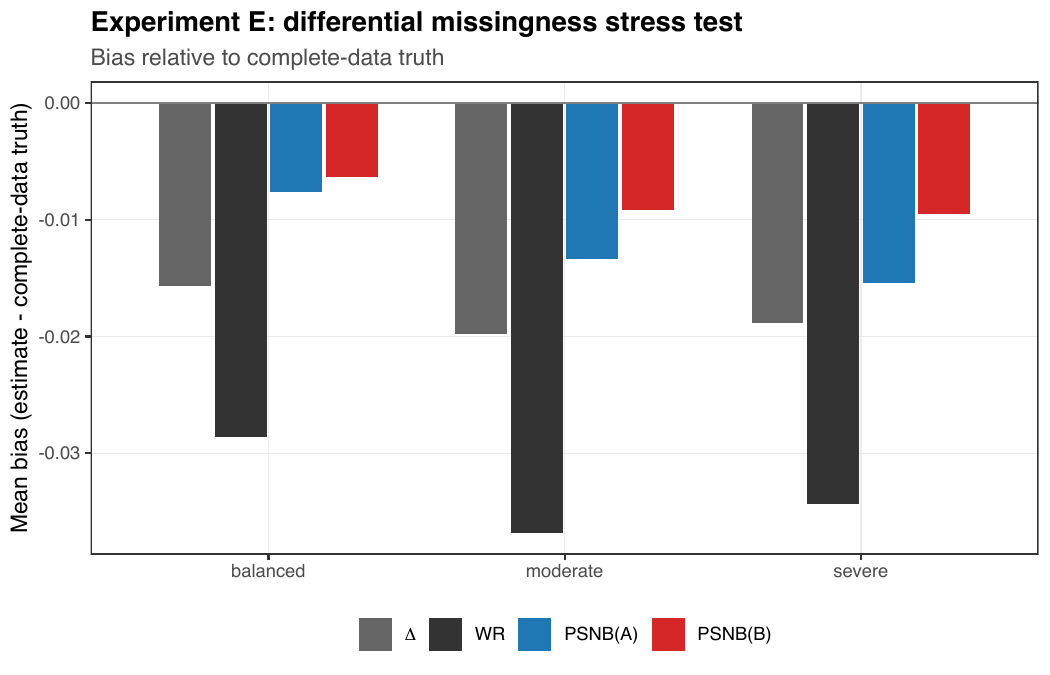}
\caption{Experiment E: differential missingness stress test. Three scenarios sweep
treated/control late-layer completion from $0.95/0.95$ (balanced) through
$0.74/0.90$ (moderate) to $0.59/0.85$ (severe). The figure shows the bias of each
summary relative to the complete-data truth. Completion rates by arm are reported
in Table~\ref{tab:missingness_summary}. PSNB$(\bm{\alpha}^{(B)})$ inherits the
smallest Layer-3 missingness sensitivity because its Layer-3 charter weight is
smallest.}
\label{fig:missingness}
\end{figure}

\begin{table}[ht]
\centering
\caption{Experiment E: differential missingness stress test. Reported are observed completion rates by arm and the mean bias (estimate minus complete-data truth) of each summary. PSNB(B), with the smallest Layer-3 weight, inherits the smallest missingness sensitivity.}
\label{tab:missingness_summary}
\begin{tabular}{@{}llllllll@{}}
\toprule
scenario & compl.\ $T$ & compl.\ $C$ & bias $\Delta$ & bias WR & bias PSNB(A) & bias PSNB(B) & $n_{\mathrm{rep}}$ \\
\midrule
balanced & 0.946 & 0.948 & -0.016 & -0.029 & -0.008 & -0.006 & 1000 \\
moderate & 0.740 & 0.897 & -0.020 & -0.037 & -0.013 & -0.009 & 1000 \\
severe & 0.589 & 0.846 & -0.019 & -0.034 & -0.015 & -0.010 & 1000 \\
\bottomrule
\end{tabular}
\end{table}

\subsection{Experiment F: Efficiency--robustness frontier and design power}

This experiment quantifies the cost of robustness \emph{within the PSNB family},
while also showing that there is no universal efficiency ordering between PSNB and
the standard Win Ratio. We sweep the upper bound $\bar\alpha_3$ on the final-layer
charter weight from $0$ to $0.30$ and report the empirical power of the resulting
capped-PSNB test against two alternative profiles: a \emph{broad-benefit} profile,
in which a moderate effect is present in every layer, and a
\emph{final-layer-dominated} profile, in which only Layer~3 carries an effect.
The unweighted Win Ratio is reported as a baseline for comparison and does not
depend on $\bar\alpha_3$. This cap-frontier portion uses $n_1=n_0=175$ and
$1000$ replicates per profile.

Figure~\ref{fig:frontier} and Table~\ref{tab:frontier_power} display the result.
Under the broad-benefit profile, capped-PSNB power rises from $0.424$ at
$\bar\alpha_3=0$ to $0.560$ at $\bar\alpha_3=0.25$, then flattens slightly.
The Win Ratio baseline is $0.507$ across the sweep, so capped PSNB crosses the
Win Ratio near $\bar\alpha_3\approx 0.10$ and exceeds it for larger caps.
Under the final-layer-dominated profile, the trade-off is sharper: capped-PSNB
power rises from $0.047$ at $\bar\alpha_3=0$ to $0.447$ at $\bar\alpha_3=0.30$,
whereas the Win Ratio baseline remains at $0.075$. A charter that completely
excludes Layer~3 is therefore essentially powerless against an alternative whose
signal is concentrated in that layer, while progressively looser caps recover
sensitivity.

The frontier is therefore not a simple monotone penalty relative to the standard
Win Ratio. Rather, it reflects how the chosen charter aligns with the
stage-conditional effect structure. Tight caps protect against later-layer
dominance and bias transmission, but can sharply reduce power against
later-layer-dominated alternatives. Looser caps recover that power and, in the
present simulations, can exceed the power of the standard Win Ratio because PSNB
allocates influence according to a prespecified stage-standardized rule rather
than according to the empirical reach distribution.

\begin{figure}[ht]
\centering
\includegraphics[width=0.92\linewidth]{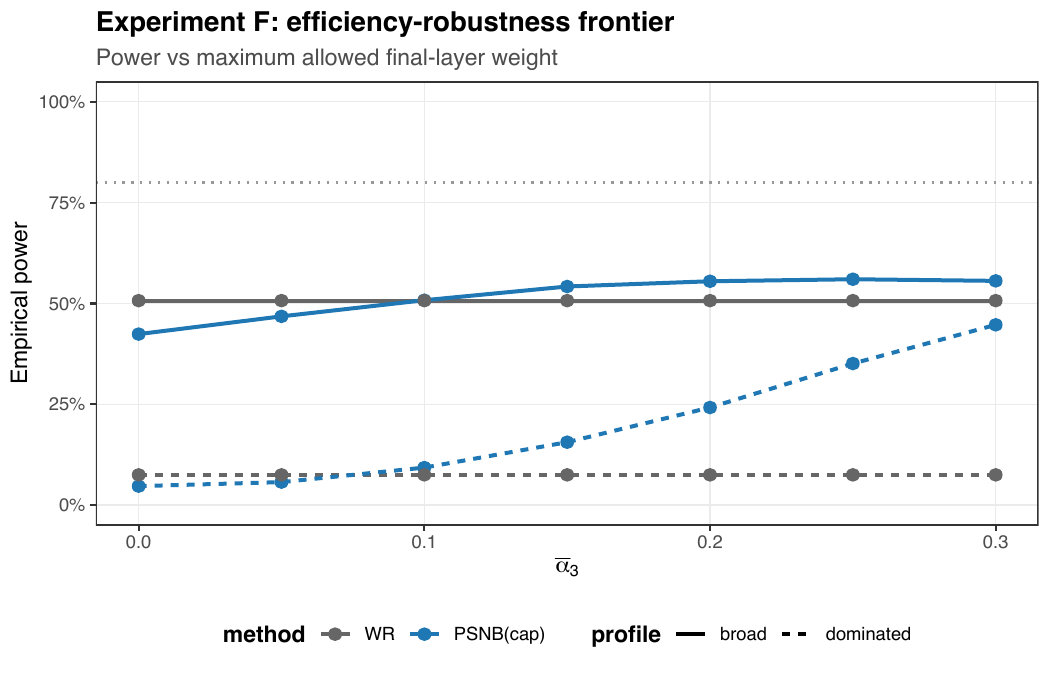}
\caption{Experiment F: efficiency--robustness frontier. Empirical power of the
capped-PSNB test against a broad-benefit alternative and a final-layer-dominated
alternative as the cap $\bar\alpha_3$ on the Layer-3 charter weight increases.
The Win Ratio baseline is constant in $\bar\alpha_3$. The figure shows the
power cost of tighter or looser Layer-3 caps.}
\label{fig:frontier}
\end{figure}

\begin{table}[ht]
\centering
\caption{Experiment F: efficiency--robustness frontier at $n=175$ per arm. Empirical power as the upper bound $\bar\alpha_3$ on the final-layer charter weight increases, under a broad-benefit profile and a final-layer-dominated profile. The trade-off between the two profiles is explicit and tunable at the charter level.}
\label{tab:frontier_power}
\begin{tabular}{@{}llllll@{}}
\toprule
$\bar\alpha_3$ & profile & method & power & MC SE & $n_{\mathrm{rep}}$ \\
\midrule
0.000 & broad & PSNB(cap) & 0.424 & 0.016 & 1000 \\
0.050 & broad & PSNB(cap) & 0.468 & 0.016 & 1000 \\
0.100 & broad & PSNB(cap) & 0.508 & 0.016 & 1000 \\
0.150 & broad & PSNB(cap) & 0.542 & 0.016 & 1000 \\
0.200 & broad & PSNB(cap) & 0.555 & 0.016 & 1000 \\
0.250 & broad & PSNB(cap) & 0.560 & 0.016 & 1000 \\
0.300 & broad & PSNB(cap) & 0.556 & 0.016 & 1000 \\
0.000 & dominated & PSNB(cap) & 0.047 & 0.007 & 1000 \\
0.050 & dominated & PSNB(cap) & 0.057 & 0.007 & 1000 \\
0.100 & dominated & PSNB(cap) & 0.093 & 0.009 & 1000 \\
0.150 & dominated & PSNB(cap) & 0.156 & 0.011 & 1000 \\
0.200 & dominated & PSNB(cap) & 0.242 & 0.014 & 1000 \\
0.250 & dominated & PSNB(cap) & 0.351 & 0.015 & 1000 \\
0.300 & dominated & PSNB(cap) & 0.447 & 0.016 & 1000 \\
0.000 & broad & WR & 0.507 & 0.016 & 1000 \\
0.050 & broad & WR & 0.507 & 0.016 & 1000 \\
0.100 & broad & WR & 0.507 & 0.016 & 1000 \\
0.150 & broad & WR & 0.507 & 0.016 & 1000 \\
0.200 & broad & WR & 0.507 & 0.016 & 1000 \\
0.250 & broad & WR & 0.507 & 0.016 & 1000 \\
0.300 & broad & WR & 0.507 & 0.016 & 1000 \\
0.000 & dominated & WR & 0.075 & 0.008 & 1000 \\
0.050 & dominated & WR & 0.075 & 0.008 & 1000 \\
0.100 & dominated & WR & 0.075 & 0.008 & 1000 \\
0.150 & dominated & WR & 0.075 & 0.008 & 1000 \\
0.200 & dominated & WR & 0.075 & 0.008 & 1000 \\
0.250 & dominated & WR & 0.075 & 0.008 & 1000 \\
0.300 & dominated & WR & 0.075 & 0.008 & 1000 \\
\bottomrule
\end{tabular}
\end{table}

The cap frontier is only one design view. Table~\ref{tab:power_by_n} reports the
corresponding sample-size sweep for the two fixed baseline charters. Under the
broad-benefit profile, all three methods gain power rapidly with increasing sample
size. At $n_1=n_0=500$, the standard Win Ratio has $0.910$ power, PSNB under
Charter~B has $0.908$ power, and PSNB under Charter~A has $0.944$ power. Thus, at a
sample size that powers the standard Win Ratio to approximately $90\%$, the baseline
PSNB charters keep pace when benefit is distributed across layers.

Under the final-layer-dominated profile, the comparison is qualitatively different.
The standard Win Ratio remains poorly powered across the examined sample-size range
($0.134$ at $n_1=n_0=500$ and $0.146$ at $n_1=n_0=800$). PSNB gains power only to the
extent that the charter allows Layer~3 to contribute: at $n_1=n_0=500$, Charter~A
reaches $0.598$ power, while the more conservative Charter~B reaches $0.190$. This
is not evidence that PSNB uniformly improves power. It shows that power is a
charter-level design consequence: a larger Layer-3 cap can recover sensitivity to a
true Layer-3 signal, whereas a smaller cap deliberately sacrifices that sensitivity
to limit bias transmission.

\begin{table}[ht]
\centering
\caption{Experiment F: simulation-based design power by per-arm sample size. Entries are empirical power with Monte Carlo standard error in parentheses, using $500$ replicates per row. The broad-benefit profile has signal across all three layers; the final-layer-dominated profile has signal only in Layer~3.}
\label{tab:power_by_n}
\begin{tabular}{@{}lllll@{}}
\toprule
profile & $n$ per arm & WR & PSNB(A) & PSNB(B) \\
\midrule
broad & 175 & 0.498 (0.022) & 0.572 (0.022) & 0.514 (0.022) \\
broad & 250 & 0.668 (0.021) & 0.710 (0.020) & 0.674 (0.021) \\
broad & 350 & 0.786 (0.018) & 0.832 (0.017) & 0.792 (0.018) \\
broad & 500 & 0.910 (0.013) & 0.944 (0.010) & 0.908 (0.013) \\
broad & 650 & 0.974 (0.007) & 0.988 (0.005) & 0.970 (0.008) \\
broad & 800 & 0.990 (0.004) & 0.996 (0.003) & 0.990 (0.004) \\
final-layer dominated & 175 & 0.074 (0.012) & 0.200 (0.018) & 0.104 (0.014) \\
final-layer dominated & 250 & 0.088 (0.013) & 0.314 (0.021) & 0.126 (0.015) \\
final-layer dominated & 350 & 0.096 (0.013) & 0.406 (0.022) & 0.128 (0.015) \\
final-layer dominated & 500 & 0.134 (0.015) & 0.598 (0.022) & 0.190 (0.018) \\
final-layer dominated & 650 & 0.146 (0.016) & 0.698 (0.021) & 0.218 (0.018) \\
final-layer dominated & 800 & 0.146 (0.016) & 0.754 (0.019) & 0.242 (0.019) \\
\bottomrule
\end{tabular}
\end{table}

\paragraph{Interpretation.}
For planning purposes, the relevant question is not only how much influence the
last layer will have under the observed reach distribution. It is also how much
influence the trial is willing to assign to that layer, and what power consequences
follow from that choice. Figure~\ref{fig:frontier} and
Table~\ref{tab:power_by_n} are quantitative versions of that question. They also
show that PSNB should not be thought of only as a robustness penalty relative to the
standard Win Ratio: depending on how the charter aligns with the stage-conditional
effect structure, PSNB can have lower or higher power than WR. These operating
characteristics are design-specific and should not be read as generic sample-size
recommendations for PSNB.

\subsection{Experiment G: Bias-budget calibration}
\label{sec:exp_g}

Figure~\ref{fig:bias_budget} traces the analytical bias budget. As the expected bias
$b$ increases, the maximum admissible last-layer weight $\bar\alpha_3$ falls, and it
falls faster for tighter tolerances $\tau$. Reading the reference charters against the
curve is informative: the more conservative baseline Charter~B $(\alpha_3=0.10)$ lies
near the $\tau\in[0.025,0.05]$ budget around $b=5$, whereas the moderate baseline
Charter~A $(\alpha_3=0.20)$ exceeds every displayed budget once the expected bias grows
beyond roughly $3$ points. The curve is a compact design device: it converts a
prespecified statement about plausible late-layer bias into an upper bound on how much
weight the last layer may carry.

\begin{figure}[ht]
\centering
\includegraphics[width=0.9\linewidth]{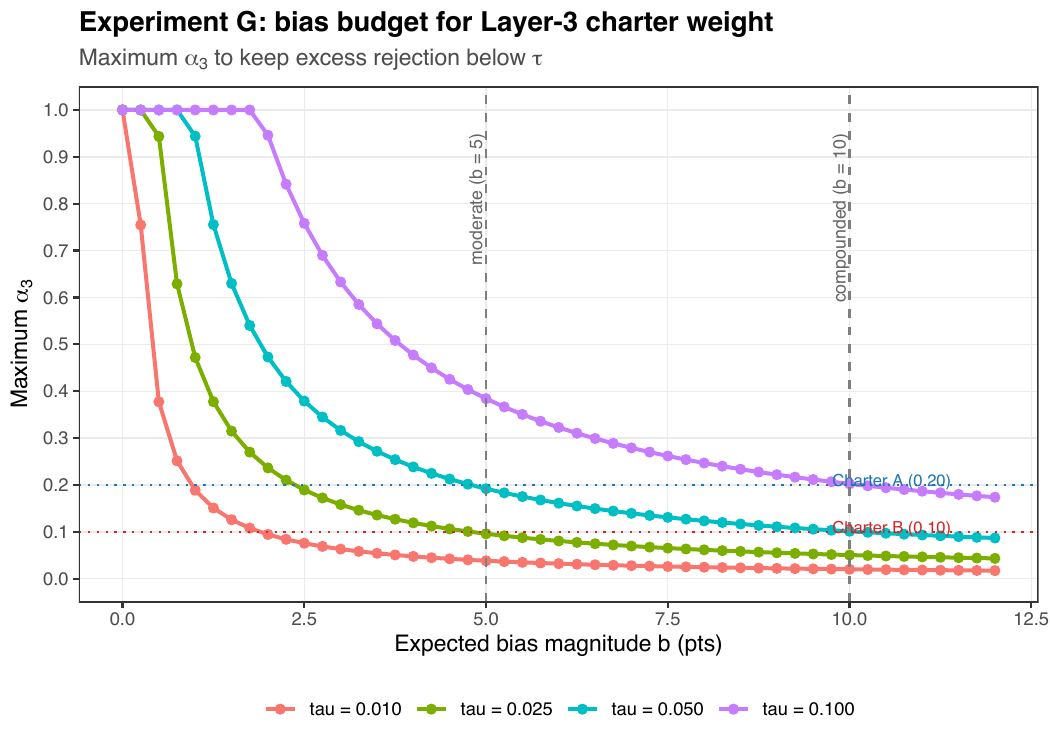}
\caption{Experiment~G, analytical bias budget. For each estimand-scale tolerance
$\tau$, the maximum last-layer charter weight $\bar\alpha_3=\tau/d_3(b)$ that keeps the
bias contribution to PSNB below $\tau$, as a function of the expected treated--control
late-layer bias $b$ (with $\delta=5$, $\sigma=10$). Dotted horizontal lines mark the two
baseline charters; dashed vertical lines mark the moderate $(b=5)$ and compounded
$(b=10)$ design benchmarks used for Charters~C and~D.}
\label{fig:bias_budget}
\end{figure}

Figure~\ref{fig:charter_calibration} reports the companion calibration simulation, which
checks the budget against realized operating characteristics. Under the global null with
no bias $(b=0)$, the empirical rejection rate stays within $[0.045,0.055]$ across the
entire $\alpha_3$ grid, confirming that the last-layer weight does not by itself disturb
calibration. As the bias grows, rejection increases monotonically in both $\alpha_3$ and
$b$. At $b=5$ the rejection rate rises from $0.057$ at $\alpha_3=0.02$ to $0.105$ at
$\alpha_3=0.10$ and $0.343$ at $\alpha_3=0.20$; at $b=10$ it rises from $0.062$ at
$\alpha_3=0.02$ to $0.282$ at $\alpha_3=0.10$ and $0.870$ at $\alpha_3=0.20$. The small
weights used by the bias-budget charters keep rejection close to nominal even under
substantial bias: at $b=10$, $\alpha_3=0.05$ (Charter~C) gives $0.090$ and
$\alpha_3=0.02$ (Charter~D) gives $0.062$, whereas $\alpha_3\ge 0.15$ inflates sharply.

\begin{figure}[ht]
\centering
\includegraphics[width=0.95\linewidth]{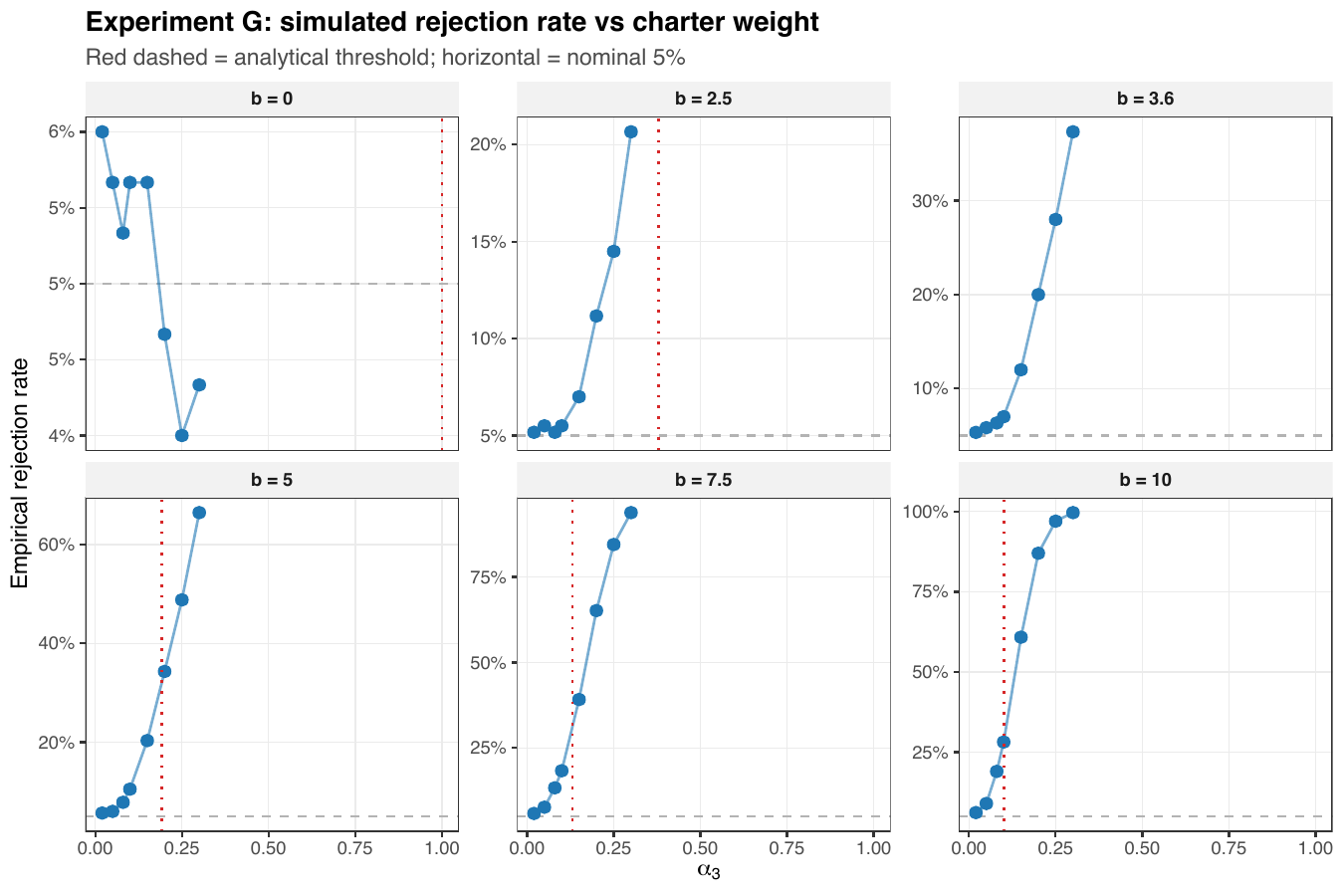}
\caption{Experiment~G, calibration simulation. Empirical two-sided rejection rate at
nominal $\alpha=0.05$ as a function of the last-layer weight $\alpha_3$, faceted by the
late-layer bias $b$, under the global null ($n_1=n_0=175$, $600$ replicates per cell).
The horizontal dashed line is the nominal level; the vertical dotted line in each panel
is the analytical $\bar\alpha_3$ at $\tau=0.05$.}
\label{fig:charter_calibration}
\end{figure}

\paragraph{Take-away.}
The analytical bias budget and the calibration simulation deliver the same qualitative
prescription---smaller late-layer weights buy robustness to late-layer bias---and let a
trial team translate a prespecified bias magnitude into a charter cap before unblinding.
The two are related but not identical: the budget bounds the estimand-scale contribution
of bias, while the simulation reports the realized type I error, and in these scenarios
the simulation supports somewhat smaller weights at moderate bias than the $\tau=0.05$
analytical bound alone would suggest. The analytical $\bar\alpha_3$ should therefore be
treated as a design starting point and confirmed against the simulated rejection rate for
the planned design, exactly as in the reporting workflow of Section~\ref{sec:report}.

\section{Practical use in trial design and reporting}

\subsection{Estimand specification}

PSNB fits naturally within the ICH E9(R1) estimand framework \citep{ICHE9R1}. A complete
specification should state:
\begin{enumerate}[leftmargin=*]
\item the target population;
\item the hierarchical outcome vector and within-layer comparison rules;
\item the strategy for intercurrent events and censoring;
\item the summary measure, namely PSNB under a prespecified charter.
\end{enumerate}
The charter and any sensitivity set should be included in the same estimand
specification, because they determine the across-layer summary being tested.

\subsection{A prospective charter workflow}

We recommend the following design-stage workflow.

\paragraph{Step 1: specify the hierarchy.}
List the layers and define all within-layer comparison rules, including any margins, thresholds,
or fixed follow-up horizon.

\paragraph{Step 2: assign baseline clinical-priority weights.}
These should reflect the scientific objective, not the expected reach pattern. The
SAP should identify the decision group and evidence used to set these weights,
including patient input when the hierarchy includes symptoms, function, or quality
of life.

\paragraph{Step 3: apply credibility modifiers.}
If a layer is more vulnerable to bias or missingness than others, reduce its weight before
renormalization. If it is objective and central to the trial objective, there is no reason to
downweight it. The modifier should be justified from prespecified evidence such as
blinding, endpoint validation, expected completeness, prior trial experience, or
external information about contextual response.

\paragraph{Step 4: prespecify a primary charter and an admissible sensitivity set.}
The admissible set may take the form
\[
\mathcal{A}
=
\left\{
\bm{\alpha}:
\alpha_k\ge 0,\ \sum_k\alpha_k=1,\ \alpha_K\le \bar\alpha_K,\ 
\alpha_1\ge\alpha_2\ge\cdots\ge\alpha_K
\right\}.
\]
This immediately yields an envelope analysis through
Proposition~\ref{prop:envelope}.
Only the primary charter should define the primary confirmatory test unless a
multiplicity adjustment for multiple charters is prespecified.

\paragraph{Step 5: report the tipping point.}
For a late layer of special concern, report the value of $\lambda^\star$ from
Proposition~\ref{prop:tipping}. This converts a broad philosophical concern into a directly
interpretable design statement.

\paragraph{Step 6: evaluate design power under plausible effect profiles.}
Before finalizing the charter, evaluate operating characteristics under at least two
profiles: a broad-benefit profile consistent with the trial objective and a
layer-dominated profile in which the most bias-sensitive layer carries most or all of
the treatment effect. This check should be performed at the proposed sample size and,
when feasible, across a sample-size grid. The goal is not to optimize the charter
post hoc, but to document the power consequence of the prespecified robustness choice.

\paragraph{Practical calibration against plausible reach.}
The simulation results suggest an additional design-stage check. Later-layer
sensitivity under PSNB is governed by the chosen charter weight, whereas later-layer
sensitivity under standard reach-weighted summaries is governed by the empirical
reach distribution. Accordingly, candidate charters should be reviewed alongside
plausible operating characteristics for reach. If the proposed charter assigns more
weight to a later layer than that layer is expected to exert under a standard
reach-weighted analysis, then the trial is, by design, choosing to emphasize that
layer more strongly; if it assigns less, then the charter is intentionally capping
that influence.

\subsection{What to report}
\label{sec:report}

For a trial report to be interpretable, we recommend reporting:
\begin{enumerate}[leftmargin=*]
\item stage reach probabilities $\widehat r_k$;
\item stage-conditional effects $\widehat\Delta_k$ with confidence intervals;
\item the primary PSNB estimate and confidence interval;
\item PSWR as a secondary interpretive summary;
\item the charter itself, including its rationale;
\item simulation-based power under the proposed charter and plausible alternative profiles;
\item a tipping-point analysis for the most sensitive late layer; and
\item a charter-envelope sensitivity analysis when the weight set is not point-identified by
the protocol rationale alone.
\end{enumerate}

Table~\ref{tab:reporting_template} gives a compact SAP and reporting template.

\begin{table}[ht]
\centering
\small
\caption{Practical PSNB specification and reporting template for a confirmatory trial.}
\label{tab:reporting_template}
\begin{tabular}{p{0.22\linewidth}p{0.35\linewidth}p{0.31\linewidth}}
\toprule
Item & SAP specification & Trial report output \\
\midrule
Hierarchy and rules &
Layer order, margins or thresholds, fixed horizon $\tau$, tie rules, and censoring rule. &
Observed reach $\widehat r_k$ and stage-conditional effects $\widehat\Delta_k$ by layer. \\
Charter &
Primary $\bm{\alpha}$, rationale, approval process, and any caps or monotone constraints. &
Primary PSNB estimate, confidence interval, and test under the prespecified charter. \\
Censoring and missingness &
Restricted-horizon, IPCW, hypothetical, or sensitivity strategy for incomplete pairwise comparisons. &
Completeness by layer and prespecified sensitivity analyses for censoring or missingness. \\
Operating characteristics &
Type I error checks, design power, and bias or missingness stress tests under plausible profiles. &
Design-specific power results and the profiles used to interpret robustness. \\
Sensitivity charters &
Sensitivity menu or convex admissible set, plus multiplicity status. &
Tipping point, charter envelope, and clear separation of confirmatory and supportive results. \\
\bottomrule
\end{tabular}
\end{table}

The point of the template is traceability. A reader should be able to determine
which charter controlled the primary hypothesis, which analyses were supportive,
and whether later-layer influence was limited by design or allowed because it was
central to the trial objective.

\paragraph{Illustrative SAP specification.}
For example, consider a 1:1 randomized cardiovascular or structural-heart trial
with a three-layer hierarchy: death through 12 months, nonfatal disease-related
hospitalization through 12 months, and a 12-month health-status change score
compared using a 5-point margin. The SAP
could define a restricted-horizon rule for the time-to-event layers, treat pairs
not orderable before the horizon according to a prespecified censoring rule, and
set the primary charter to $(0.57,0.38,0.05)$ after documenting that the final
layer is clinically important but more vulnerable to expectation bias and
differential missingness. The primary hypothesis test would be the two-sided Wald
test of $\Delta_{\mathrm{PS}}(\bm{\alpha})=0$ under that charter; PSWR would be
reported as a secondary ratio-scale summary.

The same SAP would prespecify a sensitivity set, for example
$\alpha_3\le 0.10$ with $\alpha_1\ge\alpha_2\ge\alpha_3$, and would state that the
charter envelope, tipping point, and alternative charters are supportive unless a
multiplicity adjustment is specified. The trial report would then present
$\widehat r_k$, $\widehat\Delta_k$, the primary PSNB estimate and confidence
interval, the design-power profiles used to justify the sample size, and the
preplanned sensitivity analyses for missingness and censoring.

\subsection{What PSNB does not solve}

The scope of PSNB is limited. It does not remove bias within a layer, solve missing-data problems,
or replace principled handling of censoring. It also does not make the charter
objective or self-evident; the charter remains a clinical and regulatory judgment
that must be justified. Those issues must still be addressed in the trial's
estimand and analysis plan. What PSNB does is narrower and, in our view, important:
it keeps the \emph{overall composite summary} from giving a frequently reached late
layer full numerical influence unless that role has been prespecified and justified.
The simulations in this paper do not validate IPCW, informative-censoring, or
missing-data models; they evaluate how the across-layer aggregation rule behaves
after the layer-specific comparison and incomplete-data rules have been chosen.

\section{Discussion}

The issue addressed here is not only how to compare treated and control participants
\emph{within} each layer, but how to combine those layer-specific comparisons
\emph{across} the hierarchy. Standard win statistics use the data-generated reach
distribution for that aggregation. PSNB replaces those stochastic weights with
charter weights fixed before unblinding.

That change has three practical consequences.

First, it sharpens interpretation. PSNB is a summary of stage-conditional pairwise
effects under a prespecified layer-importance scheme. It is therefore much closer
to the language investigators use when they say that some layers are more important
or more credible than others.

Second, it makes layer influence reviewable. The layer influence cap in
Corollary~\ref{cor:cap} and the tipping-point analysis in
Proposition~\ref{prop:tipping} translate an otherwise opaque concern---``the late
subjective layer may dominate''---into quantities that can be computed, reported,
and prespecified.

Third, it provides a design frontier. The simulation results show that
there is no single efficiency ordering between PSNB and the standard Win Ratio.
Within the PSNB family, reducing the weight on a highly informative final layer
lowers sensitivity to that layer but can also reduce power against alternatives
concentrated there. Relative to the Win Ratio, PSNB may have lower or higher power
depending on how the charter aligns with the stage-conditional effect structure.
In the design-power sweep, the baseline PSNB charters kept pace with a Win Ratio
powered to approximately $90\%$ under broad benefit at $n_1=n_0=500$, but the more
conservative charter had limited power when the true signal was concentrated only in
Layer~3. The gain is that the trade-off is made during design rather than inferred
only after the reach distribution is observed.

The simulations clarify these points in ways that are directly relevant to trial
practice. In Experiment~B, PSNB changed by only a few thousandths across an
order-of-magnitude sweep in final-layer reach while the standard net benefit
changed by almost $0.10$, confirming that PSNB targets stage-conditional treatment
structure rather than the empirical flow of pairs through the hierarchy. In
Experiment~C, changing only an upstream tie rule rerouted pairs to the final layer
and made the standard summary almost entirely later-layer-driven, even though the
later-layer treatment effect itself was unchanged. In Experiment~D, the standard
Win Ratio's sensitivity to later-layer bias increased sharply as reach increased,
whereas PSNB's sensitivity was nearly unchanged across reach regimes once the
charter was fixed. These results do \emph{not} show that PSNB automatically reduces
bias relative to standard summaries. They show a narrower point: the sensitivity is
controlled by the charter rather than by the realized reach distribution.

\paragraph{Relationship to adjacent methods.}
PSNB does not compete with thresholding, margins, win odds, or refined censoring
estimands. Those methods answer different questions. Thresholds and margins define
what constitutes a win within a layer. Estimand work under censoring clarifies what
pairwise target is being estimated. PSNB addresses how stage-specific information is
aggregated once those choices are made. In that sense, it is complementary to recent
methodological work rather than a replacement for it.

\paragraph{Limitations.}
The present paper emphasizes the estimand and design logic of PSNB. Several
extensions are left for future work: covariate-adjusted or doubly robust
estimation; more systematic treatment of censoring through IPCW or
restricted-horizon formulations; formal methods for eliciting charters from
multiple stakeholders; and applied case studies in completed trials. In addition,
the missingness and bias experiments reported here are stress tests designed to
expose mechanisms rather than a complete theory for all forms of measurement error
or missing-data bias. The bias-budget charters are therefore examples of how an SAP
could translate prespecified evidence into caps, not evidence that any particular
cap is generally correct.

\paragraph{Recommended use.}
We view PSNB as particularly attractive when a hierarchy contains a later layer
that is both frequently reached and more vulnerable to bias or missingness than
earlier layers. In such settings, the scientific issue is rarely whether the later
layer is important; it usually is. The issue is whether the primary composite
should allow that layer to dominate without saying so in the estimand. PSNB forces
that choice into the design.
The simulations further suggest that the later-layer charter should not be chosen
in isolation, but in light of plausible operating characteristics for reach and
power. The charter determines how much of that layer's signal---or bias---is allowed
into the primary estimand, so trial reports should make both the robustness rationale
and the corresponding power consequences visible.

\subsection*{Conclusion}

Priority-Standardized Net Benefit is an estimand for hierarchical composite
endpoints that changes only the across-layer aggregation rule. By replacing
stochastic reach weights with a prespecified priority/credibility charter, PSNB
keeps last-layer influence from being determined solely by how often pairs reach
that layer. The completed simulations show that PSNB is nearly
invariant to large changes in reach when stage-conditional effects are fixed,
that it reduces the effect of upstream tie rules on across-layer interpretation,
and that later-layer bias and missingness sensitivity follows the charter rather
than the observed reach distribution. The design-power simulations show that PSNB can keep pace
with a well-powered Win Ratio when benefit is broad, while power under
layer-dominated benefit depends directly on the prespecified Layer-3 cap. The
accompanying influence-function-based estimator
provides practical inference, and the proposed design tools---layer influence
caps, tipping-point analysis, and charter envelopes---turn debates about layer
credibility into analysis choices that can be written down before unblinding. For
trials that use hierarchical composites and worry that one later layer may dominate the result,
PSNB offers a bounded, interpretable, and practically implementable alternative.

\appendix

\section{Algorithmic summary}

The following steps compute PSNB from participant-level data.

\begin{enumerate}[leftmargin=*,label=\textbf{Step \arabic*.}]
\item For each treated--control pair, compute stage comparison scores $c_k$ and reach indicators $R_k$.
\item Estimate $\widehat U_k$ and $\widehat r_k$ for each stage.
\item Form $\widehat\Delta_k=\widehat U_k/\widehat r_k$.
\item Aggregate with the prespecified charter:
$\widehat\Delta_{\mathrm{PS}}=\sum_k\alpha_k\widehat\Delta_k$.
\item Estimate the projection variance using the treated- and control-arm
projection averages described in Section~5.3 and equation~\eqref{eq:plugin_var},
and construct Wald or bootstrap confidence intervals.
\end{enumerate}

\bibliographystyle{plainnat}

\begin{thebibliography}{99}

\bibitem[Bebu and Lachin(2016)]{BebuLachin2016}
I.~Bebu and J.~M. Lachin.
\newblock Large sample inference for a win ratio analysis of a composite outcome
  based on prioritized components.
\newblock \emph{Biostatistics}, 17(1):178--187, 2016.
\newblock \url{https://doi.org/10.1093/biostatistics/kxv032}

\bibitem[Brunner et~al.(2021)]{Brunner2021}
E.~Brunner, M.~Vandemeulebroecke, and T.~M{\"u}tze.
\newblock Win odds: An adaptation of the win ratio to include ties.
\newblock \emph{Statistics in Medicine}, 40(14):3367--3384, 2021.
\newblock \url{https://doi.org/10.1002/sim.8967}

\bibitem[Buyse(2010)]{Buyse2010}
M.~Buyse.
\newblock Generalized pairwise comparisons of prioritized outcomes in the
  two-sample problem.
\newblock \emph{Statistics in Medicine}, 29(30):3245--3257, 2010.
\newblock \url{https://doi.org/10.1002/sim.3923}

\bibitem[Dong et~al.(2020)]{Dong2020}
G.~Dong, D.~C. Hoaglin, J.~Qiu, R.~A. Matsouaka, Y.-W. Chang, J.~Wang, and M.~Vandemeulebroecke.
\newblock The win ratio: On interpretation and handling of ties.
\newblock \emph{Statistics in Biopharmaceutical Research}, 12(1):99--106, 2020.
\newblock \url{https://doi.org/10.1080/19466315.2019.1575279}

\bibitem[Even and Josse(2025)]{EvenJosse2025CausalWR}
M.~Even and J.~Josse.
\newblock Rethinking the win ratio: A causal framework for hierarchical outcome analysis.
\newblock \emph{arXiv:2501.16933}, 2025.
\newblock \url{https://arxiv.org/abs/2501.16933}

\bibitem[Finkelstein and Schoenfeld(1999)]{FinkelsteinSchoenfeld1999}
D.~M. Finkelstein and D.~A. Schoenfeld.
\newblock Combining mortality and longitudinal measures in clinical trials.
\newblock \emph{Statistics in Medicine}, 18(11):1341--1354, 1999.
\newblock \url{https://doi.org/10.1002/(SICI)1097-0258(19990615)18:11%3C1341::AID-SIM129%3E3.0.CO;2-7}

\bibitem[ICH(2019)]{ICHE9R1}
International Council for Harmonisation (ICH).
\newblock ICH E9(R1) addendum on estimands and sensitivity analysis in clinical trials to
  the guideline on statistical principles for clinical trials.
\newblock Step 4 guideline, dated 20 November 2019.
\newblock \url{https://www.ich.org/page/efficacy-guidelines}

\bibitem[Luo et~al.(2017)]{Luo2017WeightedWinLoss}
X.~Luo, J.~Qiu, S.~Bai, and H.~Tian.
\newblock Weighted win loss approach for analyzing prioritized outcomes.
\newblock \emph{Statistics in Medicine}, 36(15):2452--2465, 2017.
\newblock \url{https://doi.org/10.1002/sim.7284}

\bibitem[Mao et~al.(2022)]{Mao2022SampleSize}
L.~Mao, K.-M. Kim, and X.~Miao.
\newblock Sample size formula for general win ratio analysis.
\newblock \emph{Biometrics}, 78(3):1257--1268, 2022.
\newblock \url{https://doi.org/10.1111/biom.13501}

\bibitem[Mao(2024)]{Mao2024EstimandWR}
L.~Mao.
\newblock Defining estimand for the win ratio: separate the true effect from censoring.
\newblock \emph{Clinical Trials}, 21(5):584--594, 2024.
\newblock \url{https://doi.org/10.1177/17407745241259356}

\bibitem[Mou et~al.(2024)]{Mou2024thresholds}
Y.~Mou, T.~Kyriakides, S.~Hummel, F.~Li, and Y.~Huang.
\newblock Generalizing the Finkelstein--Schoenfeld test to incorporate multiple alternating thresholds.
\newblock \emph{arXiv:2407.18341}, 2024.
\newblock \url{https://arxiv.org/abs/2407.18341}

\bibitem[Pocock et~al.(2012)]{Pocock2012}
S.~J. Pocock, C.~A. Ariti, T.~J. Collier, and D.~Wang.
\newblock The win ratio: a new approach to the analysis of composite endpoints in clinical trials
  based on clinical priorities.
\newblock \emph{European Heart Journal}, 33(2):176--182, 2012.
\newblock \url{https://doi.org/10.1093/eurheartj/ehr352}

\bibitem[Pocock et~al.(2024)]{Pocock2024Lessons}
S.~J. Pocock, J.~Gregson, T.~J. Collier, J.~P. Ferreira, and G.~W. Stone.
\newblock The win ratio in cardiology trials: lessons learnt, new developments, and wise future use.
\newblock \emph{European Heart Journal}, 45(44):4684--4699, 2024.
\newblock \url{https://doi.org/10.1093/eurheartj/ehae647}

\end{thebibliography}

\end{document}